\documentclass[12pt,a4paper]{article}
\usepackage[latin1]{inputenc}
\usepackage[english]{babel}
\usepackage{amsmath}
\usepackage{amsfonts}
\usepackage{amssymb}
\usepackage{indentfirst}
\usepackage{dsfont}
\newtheorem{theorem}{Theorem}[section]
\newtheorem{lemma}[theorem]{Lemma}
\newtheorem{proposition}[theorem]{Proposition}
\newtheorem{corollary}[theorem]{Corollary}
\newenvironment{proof}[1][Proof]{\begin{trivlist}
\item[\hskip \labelsep {\bfseries #1}]}{\end{trivlist}}
\newenvironment{definition}[1][Definition]{\begin{trivlist}
\item[\hskip \labelsep {\bfseries #1}]}{\end{trivlist}}

\newenvironment{remark}[1][Remark]{\begin{trivlist}
\item[\hskip \labelsep {\bfseries #1}]}{\end{trivlist}}
\newcommand{\lin}{\operatorname{lin}}
\newcommand{\qed}{\nobreak \ifvmode \relax \else
      \ifdim\lastskip<1.5em \hskip-\lastskip
      \hskip1.5em plus0em minus0.5em \fi \nobreak
      \vrule height0.75em width0.5em depth0.25em\fi}
\author{Fabrice Gamboa, Jean-Michel Loubes and Paul Rochet}
\title{Maximum Entropy Estimation for Survey sampling}
\date{}
\newcommand{\mykeywords}{
Survey Sampling, Inverse Problems, Maximum Entropy Method}

\newcommand{\mysubjclass}{
 62F12,
 62D05,
 94A17 
 }
\textwidth 16cm
\evensidemargin 0cm
\oddsidemargin 0cm
\sloppy
\begin{document}
\maketitle
\begin{abstract}
Calibration methods have been widely studied in survey sampling over the last decades. Viewing calibration as an inverse problem, we extend the calibration technique by using a maximum entropy method. Finding the optimal weights is achieved by considering random weights and looking for a discrete distribution which maximizes an entropy under the calibration constraint. This method points a new frame for the computation of such estimates and the investigation of its statistical properties.
\end{abstract}
$ $ \vskip .1in
\noindent \textbf{Keywords}: \mykeywords\\
\noindent \textbf{Subject Class. MSC-2000} : \mysubjclass

\section*{Introduction}
Calibration is a well spread method to improve estimation in survey sampling, using extra information from an auxiliary variable. This method provides approximately unbiased estimators with variance smaller than that of the usual Horvitz-Thompson estimator (see for example \cite{MR0413318}).  Calibration has been introduced  by Deville and S{\"a}rndal in \cite{MR1173804}, extending an idea of \cite{Deville5}. For general references, we refer to \cite{MR2274771}, \cite{MR2024768} and  for an extension to variance estimation to \cite{singh17}.

\noindent
Finding the solution to a calibration equation involves minimizing an energy under some constraint. 
More precisely, let $s$ be a random sample of size $n$ drawn from a population $U$ of size $N$, $y$ is the variable of interest and  $x$ is a given auxiliary variable, for which the mean $t_x$ over the population is known. Further, let $d \in \mathbb R^n$ be the standard sampling weights (that is the  Horvitz-Thompson ones). Calibration derives an estimator $\hat t_y = N^{-1}  \sum_{i\in s}w_i y_i$ of the population mean $t_y$ of $y$. The weights $w_i$ are chosen to minimize a dissimilarity (or distance) $\mathcal D(.,d)$ on $\mathbb R^n$ with respect to the Horvitz-Thompson weights $d_i$ and under the constraint 
\begin{equation}
N^{-1}  \sum_{i\in s}w_i x_i = t_x.
\label{cadix}
\end{equation}

Following \cite{theberge18}, we will view here calibration  as a linear inverse problem. In this paper, we use Maximum Entropy Method on the Mean (MEM) to build the calibration weights. Indeed, MEM is a strong machinery  for solving linear inverse problems. It tackles a linear inverse problem by finding a measure maximizing an entropy under some suitable constraint. 
It has been extensively studied and used in many applications, see for example  \cite{MR1397223}, \cite{MR1331146}, \cite{MR1429928}, \cite{Kitamura14}, \cite{Gamboa9}, \cite{Fermin8} or \cite{MR2115273}.

\noindent 
Let us roughly explain how MEM works in our context. First we fix a {\it prior} probability measure $\nu$ on $\mathbb{R}^n$ with mean value equal to $d$. Then, the idea is to modify the weights in the sample mean in order to get a representative sample for the auxiliary variable $x$, but still being as close as possible to $d$, which have the desirable property of yielding an unbiased estimate for the mean. So, we will look for a {\it posterior} probability measure minimizing the entropy (or Kullback information) with respect to  $\nu$  and satisfying a constraint related to \eqref{cadix}. It appears that the MEM estimator is in fact a specific calibration estimator for which the corresponding dissimilarity $\mathcal D(.,d)$ is determined by the choice of the prior distribution $\nu$.
Hence, the MEM methodology provides a general Bayesian frame to fully understand calibration procedures in survey sampling where  the different choices of dissimilarities appear as different choices of prior distributions.

\noindent
An important problem when studying calibration methods is to understand the amount of information contained in the auxiliary variable. 
Indeed,  it appears that the relationships between the variable to be estimated and the auxiliary variable are crucial to improve estimation (see for example  \cite{MR2236453} or \cite{MR2274771}). When complete auxiliary information is available, increasing the correlation between the variables is made possible by replacing the auxiliary variable $x$ by some function of it, say $u(x)$. So, we consider efficiency issues for a collection of calibration estimators, depending on both the choice of the auxiliary variable and the dissimilarity. Finally, we provide an optimal way of building an efficient estimator using the MEM methodology. \\

The article falls into the following parts. The first section recalls the calibration method in survey sampling, while the second exposes the MEM methodology in a general framework, and its application to calibration and instrument estimation. Section~\ref{sec3} is devoted to the choice of a data driven calibration constraint in order to build an efficient calibration estimator. It is shown to be optimal under strong asymptotic assumptions on the sampling design. 
Simulations illustrate previous results in Section~\ref{sec4} while the proofs are postponed to Section~\ref{sec5}.

\section{Calibration Estimation of a linear parameter}\label{sec1}
Consider a large population $U = \left\lbrace 1,...,N \right\rbrace $ and an unknown characteristic $y = (y_1,...,y_N) \subset \mathbb R^N$. Our aim is to estimate its mean $t_y := N^{-1} \sum_{i \in U} y_i $ when only a random subsample $s$ of the whole population is available.  So the observed data are $(y_i)_{i \in s}$. The sampling design is the probability distribution $p$ defined for each subset $s \subset U$ as the probability $p(s)$ that $s$ is observed. We assume that $\pi_i := p( i \in s ) = \sum_{s, \ i \in s } p(s)$ is strictly positive for all $i \in U$, so $ d_i = 1/\pi_i $ is well defined. A standard estimator of $t_y$ is given by the Horvitz-Thompson estimator:
$$ \hat{t}_y^{HT} = N^{-1} \sum_{i \in s} \frac{y_i}{\pi_i}  = N^{-1} \sum_{i \in s} d_i y_i .$$
This estimator is unbiased and is widely used for practical cases, see for instance \cite{Deville5} for a complete survey.\\

 Suppose that it exists an auxiliary vector variable $x = (x_1,...,x_N)$, that is entirely observed and set $t_x = N^{-1} \sum_{i \in U} x_i \in \mathbb{R}^k$. If the Horvitz-Thompson estimator of $t_x$, $ \hat t_x^{HT} = N^{-1} \sum_{i \in s} d_i x_i $ is far from the true value $t_x$, it may imply that the sample does not describe well the behavior of the variable of interest in the total population. So, to prevent biased estimation due to bad sample selection, inference on the sample can be achieved by considering a  modification of the weights of the individuals chosen in the sample.\\
\indent One of the main methodology used to correct this effect is the calibration method, (see \cite{MR1173804}). The {\it bad sample effect} is corrected by deriving new weights for the sample mean, but still being close to the $d_i$'s to get a small bias. For this, consider a class of weighted estimators $ N^{-1} \sum_{i \in U} w_i y_i$ where the weights $w = (w_i)_{i \in s}$ are selected to be close to $d=(d_i)_{i \in s}$ under the calibration constraint
$$  N^{-1} \sum_{i \in s} w_i x_i = t_x.    $$
There are two basic components in the construction of calibration estimators, namely a dissimilarity and a set of calibration equations. Let $w \mapsto \mathcal D(w,d)$ be a dissimilarity between some weights and the Horvitz-Thompson ones. Assume that this dissimilarity is minimal for $w_i=d_i$. The method consists in choosing weights minimizing $\mathcal D(.,d)$ under the constraint $N^{-1} \sum_{i \in s} w_i x_i = t_x$. \\

A typical dissimilarity is the $\chi^2$ distance $w \mapsto \sum_{i \in s} (\pi_i w_i - 1)^2/(q_i \pi_i)$ for $(q_i)_{i \in s}$ a positive smoothing sequence (see \cite{MR1173804}). So the new estimator is defined as $\hat t_y=N^{-1} \sum_{i \in s} \hat w_i y_i$, where the weights $\hat w_i$ minimizes
$ \mathcal D(w,d) = \sum_{i \in s} (\pi_i w_i - 1)^2/q_i \pi_i$
under the constraint $N^{-1} \sum_{i \in s} \hat w_i x_i = t_x$. Denote by $a^t$ the transpose of $a$, the solution of this minimization problem is given by
$$  \hat t_y = \hat t_y^{HT} + (t_x - \hat t_x^{HT})^t \hat B,    $$
where $ \hat B = \textstyle \left[ \sum_{i \in s} q_i d_i  x_i x_i^t\right]^{-1} \sum_{i \in s} q_i d_i  y_i x_i$. Note that this is a generalized regression estimator. It is natural to consider alternative measures, which are given in \cite{MR1173804}. We first point out that the existence of a solution to the constrained minimization problem depends on the choice of the dissimilarities. Then, different choices can lead to weights with different behaviors, different ranges of values for the weights that may be found unacceptable by the users. We propose an approach where dissimilarities have a probabilistic interpretation. This highlights the properties of the resulting estimators.

\section{Maximum Entropy for Survey Sampling}\label{sec2}
\subsection{MEM methodology}\label{s:MEMintro}
Consider the problem of recovering an unknown measure $\mu$ on a measurable space $\mathcal{X}$ under moment conditions.  We observe a random sample $T_1,...,T_n \sim \mu$. For a given function $\mathrm x: \mathcal X \rightarrow \mathbb R^k$ and a known quantity $t_x \in \mathbb R^k$, we aim to estimate $\mu$ satisfying
\begin{equation}
\label{chile1} \int_\mathcal X \mathrm x(t) d \mu(t) = t_x. \end{equation}
This issue belongs to the class of generalized moment problems with convex constraints (we refer to \cite{MR1408680} for general references), which can be solved using maximum entropy on the mean (MEM). The general idea is to modify the empirical distribution $\mu_n=n^{-1}\sum_{i=1}^n \delta_{T_i}$ in order to take into account the additional information on $\mu$ given by the moment equation \eqref{chile1}. For this, consider weighted versions of the empirical measure $n^{-1} \sum_{i=1}^n p_i \delta_{T_i}$  for weights $p_i$ properly chosen. The MEM estimator $\hat \mu_n$ of $\mu$ is a weighted version of $\mu_n$, where the weights are the expectation of a random variable $P = (P_1,...,P_n)$, drawn from a finite measure $\nu^*$ close to a \textit{prior} $\nu$. This prior distribution conveys the information that $\hat \mu_n$ must be close to the empirical distribution $\mu_n$.
More precisely, let first define the relative entropy or Kullback information between two finite measures $Q,R$ on a space $(\Omega,\mathcal{A})$ by setting
$$ K(Q,R) = \begin{cases} \int_\Omega  \log  \left( \frac{d Q}{dR} \right)  dQ - Q(\Omega) + 1 \: &  {\rm if } \ Q \ll R \\
 +\infty  \: &  {\rm otherwise.}  \end{cases} $$ 
Since this quantity is not symmetric, we will call it the relative entropy of $Q$ with respect to $R$. Note also that, among the literature in optimization, the relative entropy is often defined as the opposite of the entropy defined above, which explains the name of maximum entropy method, while with our notations, we consider the minimum of the entropy. \\

\noindent Given our prior $\nu$, we now define $\nu^*$ as the measure minimizing $K(.,\nu)$ under the constraint that the linear constraint holds in mean:
 $$ \mathbb E_{\nu^*} \left[ \textstyle n^{-1} \sum_{i =1}^n P_i \mathrm x_i \right] = \displaystyle \dfrac{1}{\nu^*(\mathbb R^n)} \int_{\mathbb R^n} \textstyle \left[ n^{-1} \sum_{i =1}^n p_i \mathrm
 x_i \right] d \nu^*(p_1,...,p_n)= t_x,  $$
where we set $\mathrm x_i = \mathrm x(T_i)$. We then build the MEM estimator $\hat \mu_n = n^{-1} \sum_{i = 1}^n \hat p_i \delta_{T_i} $, where $\hat p = (\hat p_1,...,\hat p_n) = \mathbb E_{\nu^*} (P)$.\\

This method provides an efficient way to estimate some linear parameter $t_y  = \int_\mathcal X \mathrm y d \mu$ for $\mathrm y: \mathcal X \rightarrow \mathbb R$ a given map. The empirical mean $\overline{\mathrm y} = \int_\mathcal X \mathrm y d\mu_n$ is an unbiased and consistent estimator of $t_y$ but may not have the smallest variance in this model. We can improve the estimation by considering the MEM estimator $ \hat t_y = n^{-1} \sum_{i =1}^n \hat p_i \mathrm y_i$, which has a lower variance than the empirical mean and is asymptotically unbiased (see \cite{MR1429928}). \\

In many actual situations, the function $\mathrm x$ is unknown and only an approximation to it, say $\mathrm x_m$, is available. Under regularity conditions, the efficiency properties of the MEM estimator built with the approximate constraint have been studied in \cite{devore2} and \cite{l2}, introducing the approximate maximum entropy on the mean method (AMEM). More precisely, the AMEM estimate of the weights is defined as the expectation of the variable $P$ under the distribution $\nu^*_m$ minimizing $K(.,\nu)$ under the approximate constraint 
\begin{equation}
\label{chile2} \mathbb E_{\nu^*_m} \left[ \textstyle n^{-1} \sum_{i =1}^n P_i \ \! \mathrm x_m(T_i) \right] =  t_x. \end{equation}
It is shown that, under assumptions on $ \mathrm x_m$, the AMEM estimator of $t_y$ obtained in this way is consistent as $n$ and $m$ tends to infinity. This procedure enables to increase the efficiency of a calibration estimator while remaining in a Bayesian framework, as shown in Section \ref{sec32}.

\subsection{Maximum entropy method for calibration}\label{sec22}

Recall that our original problem is to estimate the population mean $t_y = N^{-1} \sum_{i \in U} y_i$ based on the observations $\left\{y_i, i \! \in \! s \right\}$ and auxiliary information $\left\{x_i, i \! \in \! U \right\}$. We introduce the following notations:
$$ \mathrm y_i =  n N^{-1} d_i y_i, \ \mathrm x_i =  n N^{-1} d_i x_i, \ p_i = \pi_i w_i. $$
Note that the variables of interest are rescaled to match the MEM framework. The weights $(p_i)_{i \in s}$ are now identified with a discrete measure on the sample $s$. The Horvitz-Thompson estimator $\hat t_y^{HT} = N^{-1} \sum_{i \in s} d_i y_i = n^{-1} \sum_{i \in s} \mathrm y_i$ is the preliminary estimator we aim at improving. The calibration constraint $n^{-1} \sum_{i \in s} p_i \mathrm x_i = t_x$ stands for the linear condition satisfied by the discrete measure $(p_i)_{i \in s}$. So, it appears that the calibration problem follows the pattern of maximum entropy on the mean. Let $\nu$ be a prior distribution on the vector of the weights $(p_i)_{i \in s}$. The solution $\hat p = (\hat p_i)_{i \in s}$ is the expectation of the random vector $P=(\pi_i W_i)_{i \in s}$ drawn from a \textit{posterior} distribution $\nu^*$, defined as the minimizer of the Kullback information $K(.,\nu)$ under the condition that the calibration constraint holds in mean
$$ \mathbb E_{\nu^*} \left[ \textstyle n^{-1} \sum_{i \in s} P_i \mathrm x_i \right]= \mathbb E_{\nu^*} \left[ \textstyle N^{-1} \sum_{i \in s} W_i x_i \right] = t_x.     $$
We take the solution $\hat p  = \mathbb E_{\nu^*} (P)$ and define the corresponding MEM estimator $\hat t_y$ as
$$ \hat t_y =  n^{-1} \sum_{i \in s} \hat p_i \mathrm y_i =  N^{-1} \sum_{i \in s} \hat w_i y_i,$$ 
where we set $\hat w_i = d_i \hat p_i$ for all $i \! \in \! s$. Under the following assumptions, we will show in Theorem \ref{prop1} that maximum entropy on the mean gives a Bayesian interpretation of calibration methods.\\
 
The random weights $P_i,i \in s$ (and therefore the $W_i,i \in s$) are taken independent and we denote by $\nu_i$ the prior distribution of $P_i$. It follows that $\nu = \otimes_{i \in s} \nu_i $. Moreover, all prior distributions $\nu_i$ are integrable with mean $1$. This last assumption conveys that $\hat p_i$ must be close to $1$, equivalently, $\hat w_i = d_i \hat p_i$ must be close to the Horvitz-Thompson weight $d_i$. \\ 
Let $\varphi: \mathbb R \rightarrow \mathbb R$ be a closed convex map, the convex conjugate $\varphi^*$ of $\varphi$ is defined as
$$ \forall s \in \mathbb R, \  \varphi^*(s) = \underset{t \in \mathbb{R}}{\text{sup}} ( st - \varphi(t)). $$
For $\nu$ a probability measure on $\mathbb{R}$, we denote by $\Lambda_\nu$ the log-Laplace transform of $\nu$:
$$ \Lambda_\nu(s) = \log \int e^{s x}  d\nu(x), \ s \in \mathbb R. $$ 
Its convex conjugate $\Lambda^*_\nu$ is the Cramer transform of $\nu$. Moreover, denote by $S_\nu$ the interior of the convex hull of the support of $\nu$ and let $ D(\nu)  =  \left\lbrace s \in \mathbb R: \ \Lambda_\nu(s) <  \infty \right\rbrace$. In the sequel, we will always assume that $\Lambda_{\nu_i}$ is essentially smooth (see \cite{rock}) for all $i$, strictly convex and that $\nu_i$ is not concentrated on a single point. The last assumption means that if $D(\nu_i) = (- \infty;\alpha_i)$, ($\alpha_i \leq + \infty$), then $\Lambda_{\nu_i}'(s)$ goes to $+ \infty$ whenever $\alpha_i < + \infty$ and $s$ goes to $\alpha_i$. Notice that, under these assumptions, $\Lambda_{\nu_i}'$ is an increasing bijection between the interior of $D(\nu_i)$ and $S_{\nu_i}$. Moreover, we have the functional equalities $ ({\Lambda_{\nu_i}^*}')^{-1} = \Lambda_{\nu_i}' $ and $({\Lambda_{\nu_i}^*})^* = \Lambda_{\nu_i}$. 

\begin{definition}\textbf{:} We say that the optimization problem is feasible if there exists a vector $\delta = (\delta_i)_{i \in s} \in \otimes_{i \in s} S_{\nu_i}$ such that:
$$  N^{-1} \sum_{i \in s} \delta_i x_i = t_x.  $$
\end{definition}
Under the last assumptions, the following proposition claims that the solutions $(\hat w_i)_{i \in s}$ are easily tractable.

\begin{theorem}[survey sampling as MEM procedure]\label{prop1} Assume that the optimization problem is feasible. The MEM estimator $\hat w=(\hat w_1,...,\hat w_n)$ minimizes over $\mathbb R^n$
$$ (w_1,...,w_n) \mapsto \sum_{i \in s} \Lambda^*_{\nu_i}  (\pi_i w_i)  $$
under the constraint $N^{-1} \sum_{i \in s} \hat w_i x_i = t_x$. 
\end{theorem}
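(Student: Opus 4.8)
The plan is to combine the product structure of the prior with the standard duality of maximum entropy on the mean, and then to match the resulting weights against the first-order (Karush--Kuhn--Tucker) conditions of the finite-dimensional convex program in the statement. First I would compute the posterior $\nu^*$ explicitly. Setting $\psi(p) = n^{-1}\sum_{i\in s} p_i \mathrm x_i - t_x$, the constraint in mean reads $\int \psi \, d\nu^* = 0$, and $\nu^*$ minimizes $Q \mapsto K(Q,\nu)$ over finite measures $Q \ll \nu$ subject to this linear condition. A pointwise variation of $K(\cdot,\nu)$ shows that, as soon as a minimizer exists, its density has the Gibbs form $d\nu^*(p) = \exp(\langle\lambda,\psi(p)\rangle)\,d\nu(p)$ for some multiplier $\lambda \in \mathbb R^k$. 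The feasibility assumption plays here the role of a Slater condition: since $t_x$ is attained by a vector $\delta \in \otimes_{i\in s} S_{\nu_i}$ lying in the interior of the relevant moment domain, the dual problem admits a solution and strong duality holds, so such a $\lambda$ indeed exists.

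Next I would use that $\nu = \otimes_{i\in s}\nu_i$ and that $\psi$ is additive in the coordinates $p_i$. The constant $-\langle\lambda,t_x\rangle$ only rescales $\nu^*$, hence
$$ d\nu^*(p) \propto \prod_{i\in s} \exp(s_i p_i)\,d\nu_i(p_i), \qquad s_i := n^{-1}\langle\lambda,\mathrm x_i\rangle, $$
so that the posterior factorizes as a product of exponentially tilted priors. The coordinates therefore remain independent under $\nu^*$, the rescaling cancels in each marginal, and one obtains
$$ \hat p_i = \mathbb E_{\nu^*}(P_i) = \frac{\int p_i\, e^{s_i p_i}\,d\nu_i(p_i)}{\int e^{s_i p_i}\,d\nu_i(p_i)} = \Lambda_{\nu_i}'(s_i). $$
Here the essential smoothness and strict convexity of $\Lambda_{\nu_i}$ ensure that $\Lambda_{\nu_i}'$ is an increasing bijection between the interior of $D(\nu_i)$ and $S_{\nu_i}$, so $\hat p_i$ is well defined and lies in $S_{\nu_i}$.

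Finally I would verify that $\hat p = (\hat p_i)_{i\in s}$ solves the convex program. The map $p \mapsto \sum_{i\in s}\Lambda_{\nu_i}^*(p_i)$ is convex and the constraint $n^{-1}\sum_{i\in s} p_i\mathrm x_i = t_x$ is affine, so any feasible point satisfying the Lagrange conditions is automatically a global minimizer. Those conditions read $(\Lambda_{\nu_i}^*)'(\hat p_i) = s_i = n^{-1}\langle\lambda,\mathrm x_i\rangle$ for every $i$, which is exactly the identity above once inverted through the stated functional equality $((\Lambda_{\nu_i}^*)')^{-1} = \Lambda_{\nu_i}'$; feasibility of $\hat p$ holds by the very choice of $\lambda$. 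Passing back through $p_i = \pi_i w_i$ and $\mathrm x_i = n N^{-1} d_i x_i$, so that $n^{-1}\sum_{i\in s} p_i\mathrm x_i = N^{-1}\sum_{i\in s} w_i x_i$, turns the objective into $\sum_{i\in s}\Lambda_{\nu_i}^*(\pi_i w_i)$ and the constraint into $N^{-1}\sum_{i\in s}\hat w_i x_i = t_x$, which is the claim.

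I expect the main obstacle to be the first step, namely rigorously justifying the Gibbs form of $\nu^*$ and, above all, the existence of the dual multiplier $\lambda$. This is precisely where the feasibility hypothesis and the essential smoothness of the $\Lambda_{\nu_i}$ are needed, so as to invoke the standard minimum-relative-entropy and convex-duality machinery; once $\lambda$ is secured, the remaining steps are formal consequences of the convexity of $\Lambda_{\nu_i}^*$ and of the Legendre-duality identities already recorded in the excerpt.
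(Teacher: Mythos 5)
Your proposal is correct, but it is organized quite differently from the paper's own proof. You solve the full MEM problem head-on: you exhibit the posterior $\nu^*$ as an exponential tilting of $\nu$ with a $k$-dimensional multiplier $\lambda$, use the product structure to get $\hat p_i = \Lambda_{\nu_i}'(s_i)$ with $s_i = n^{-1}\lambda^t \mathrm{x}_i$, and then \emph{verify} that this $\hat p$ satisfies the KKT conditions of the finite-dimensional program $\min \sum_{i\in s}\Lambda^*_{\nu_i}(p_i)$ under the affine constraint, concluding by convexity. The paper never computes $\nu^*$ or its mean: it performs a two-stage minimization, first minimizing $K(\cdot,\nu)$ over measures whose normalized mean equals a fixed vector $\pi w$ (an inner problem with an $n$-dimensional multiplier $\lambda_w$), shows by conjugate duality that the inner optimal value is exactly $\sum_{i\in s}\Lambda^*_{\nu_i}(\pi_i w_i)$, and then observes that the MEM estimator is by definition the argmin of this value function over the calibration set $S$. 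The trade-off is real: your route leans harder on the existence of the global dual multiplier (which, as you rightly flag, is where feasibility and essential smoothness must be invoked, via attainment of the dual infimum), but it delivers the explicit weight formula $\hat w_i = d_i\Lambda_{\nu_i}'(\hat\lambda^t d_i x_i)$ as a by-product --- i.e.\ you prove Theorem \ref{prop1} and Proposition \ref{prop2} in one pass. The paper's route only needs the value of the inner problems (the Cram\'er transform identity holds as a statement about conjugates, independently of whether the inner suprema are attained), and it makes the interpretation ``prior $\mapsto$ dissimilarity $=$ Cram\'er transform'' transparent, leaving the explicit weights to a separate, purely finite-dimensional argument.
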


Hence, we point out that maximum entropy on the mean method leads to calibration estimation, where the dissimilarity is determined by the Cramer transforms $\Lambda^*_{\nu_i}, i \in s $ of the prior distributions $\nu_i$.

\begin{remark}\textbf{: (relationship with Bregman divergences)} Taking the priors $\nu_i$ in a certain class of measures may lead to specific dissimilarities known as Bregman divergences. We refer to \cite{MR2115273} for a definition. In the MEM method, there are two different kinds of priors for which the resulting dissimilarity may be seen as a Bregman divergence. Let $\nu$ be a probability measure with mean $1$ and such that $\Lambda_\nu$ is a strictly convex function. Then, $\Lambda_\nu^*$ enables to define a Bregman divergence. It will play the role of the dissimilarity resulting from the MEM procedure in the two following situations.\\
First, consider priors $\nu_i, i \in s$ all taken equal to $\nu$. It is a simple calculation to see that the assumptions made on $\nu$ imply that $\Lambda_\nu^* (1) = {\Lambda_\nu^*} ' (1) = 0$. The resulting dissimilarity can thus be written as
$$ \mathcal D(w,d) = \sum_{i \in s} \Lambda_{\nu}^* (\pi_i w_i) = \sum_{i \in s} \left[ \Lambda_{\nu}^*  (\pi_i w_i) - \Lambda_{\nu}^{*} (1) - {\Lambda_{\nu}^*}'(1) ( \pi_i w_i - 1) \right].    $$
Here, we recognize the expression of the Bregman divergence between the weights $ \left\{ \pi_i w_i, \ i \in s \right\}$ and $1$ associated to the convex function $\Lambda^{*}_{\nu}$. \\
Another possibility is to take prior distributions $\nu_i$ lying in some suitable exponential family. More precisely, define the prior distributions as
$$\forall i \in s, \forall x \in \mathcal X, d \nu_i(x) = \exp (\alpha_i x + \beta_i) d \nu( d_i x),$$
where $\beta_i = - \Lambda_{\nu} ( {\Lambda_{\nu}^*} '(d_i))$ and $\alpha_i = d_i {\Lambda_{\nu}^*} '(d_i)$ are properly chosen so that $\nu_i$ is a probability measure with mean $1$. Here we recover after some computation the following dissimilarity
$$ \mathcal D(w,d) = \sum_{i \in s} \left[ \Lambda_{\nu}^*  (w_i) - \Lambda_{\nu}^{*} (d_i) - {\Lambda_{\nu}^*}'(d_i) ( w_i - d_i) \right],      $$
which is the Bregman divergence between $w$ and $d$ associated to $\Lambda^{*}_{\nu}$.
\end{remark}

\subsection{Bayesian interpretation of calibration using MEM}\label{sec22bis} 
In a classical presentation, calibration methods heavily rely on a distance choice. Here, this choice corresponds to different prior measures $(\nu_i)_{i \in s}$. We now see the probabilistic interpretation of some commonly used distances. \\  

\noindent \textbf{Stochastic interpretation of some usual calibrated survey sampling estimators}
\begin{enumerate}
\item Generalized Gaussian prior.\\
For a given positive sequence $q_i, i \! \in \! s$, let $W_i$ having a Gaussian distributions $\mathcal N(d_i,d_i q_i)$ which corresponds to $\nu_i \sim \mathcal N(1,\pi_i q_i)$. We get
\begin{eqnarray*}
\forall t \in \mathbb R, \ \Lambda_{\nu_i}(t) = \dfrac{q_i \pi_i t^2}{2} + t \ ; \ \Lambda_{\nu_i}^*(t) = \dfrac{(t-1)^2}{2 \pi_i q_i}
\end{eqnarray*}
The calibrated weights in that cases minimize the criterion
$$ \mathcal D_1(w,d) = \sum_{i \in s} \dfrac{(\pi_i w_i -1)^2}{q_i \pi_i} .$$
So, we recover the $\chi^2$ distance discussed in Section \ref{sec1}. This is one of the main distance used in survey sampling. The choice of the $q_i$ can be seen as the choice of the variance of the Gaussian prior. The larger the variance, the less stress is laid on the distance between the weights and the original Horvitz-Thompson weights.
\item Exponential prior.\\
We take a unique prior $\nu$ with an exponential distribution with parameter $1$. That is, $\nu = \nu^{\otimes n}$. We have in that case
$$ \forall t \in \mathbb R_+^*, \ \Lambda_{\nu}^*(t) = - \log t + t -1.     $$
This corresponds to the following dissimilarity
$$ \mathcal D_2(w,d) = \sum_{i \in s} - \log (\pi_i w_i) + \pi_i w_i.       $$
We here recognize the Bregman divergence between $ (\pi_i w_i)_{i \in s}$ and $1$ associated to $\Lambda_{\nu}^*$, as explained in the previous remark. A direct calculation shows that this is also the Bregman divergence between $w$ and $d$ associated to $\Lambda_{\nu}^*$. The two distances are the same in that case.
\item Poisson prior.\\
If we choose for prior $\nu_i=\nu, \forall i \in s$, where $\nu$ is the Poisson distribution with parameter $1$, then we obtain
$$ \forall t \in \mathbb R_+^*, \ \Lambda_{\nu}^*(t) = t \log t - t + 1 .     $$
So we have the following contrast
$$ \mathcal D_3(w,d) = \sum_{i \in s} \pi_i w_i \log (\pi_i w_i) - \pi_i w_i. $$
So we recover the Kullback information where $(\pi_i w_i)_{i \in s}$ is identified with a discrete measures on $s$.
\end{enumerate}
MEM leads to a classical calibration problem where the solution is defined as a minimizer of a convex function subject to linear constraints. The following result gives another expression of the solution for which the computation may be easier in practical cases.

\begin{proposition}\label{prop2} Assume that the optimization problem is feasible, the MEM estimator $\hat w$ is given by:
\begin{equation}
\label{chile3} \forall i \in s, \ \hat w_i = d_i  \Lambda_{\nu_i}' (\hat \lambda^t d_i x_i ) \end{equation} 
where $\hat \lambda$ minimizes over $\mathbb R^k$ $ \lambda \mapsto \sum_{i \in s} \Lambda_{\nu_i} ( \lambda^t d_i x_i)  - \lambda^t t_x$.
\end{proposition}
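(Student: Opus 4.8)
The plan is to identify Proposition~\ref{prop2} as the convex (Fenchel--Lagrange) dual of the minimization problem established in Theorem~\ref{prop1} and to extract \eqref{chile3} from the stationarity conditions. By Theorem~\ref{prop1}, $\hat w$ minimizes the strictly convex, closed function $\Phi(w)=\sum_{i\in s}\Lambda_{\nu_i}^*(\pi_i w_i)$ over the affine set $\{w\in\mathbb R^n:\ N^{-1}\sum_{i\in s}w_i x_i=t_x\}$. First I would attach a multiplier $\lambda\in\mathbb R^k$ to the $k$ scalar constraints and form the Lagrangian
$$ L(w,\lambda)=\sum_{i\in s}\Lambda_{\nu_i}^*(\pi_i w_i)-\lambda^t\Big(N^{-1}\textstyle\sum_{i\in s}w_i x_i-t_x\Big). $$

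Next I would compute the dual function $g(\lambda)=\inf_{w\in\mathbb R^n}L(w,\lambda)$. Since the objective is separable in the $w_i$ and the constraint is linear, the infimum decouples into one-dimensional infima, each of which is, by definition of the convex conjugate, minus the value of a Legendre transform of $\Lambda_{\nu_i}^*$. Using the biconjugation identity $(\Lambda_{\nu_i}^*)^*=\Lambda_{\nu_i}$ from the excerpt, this yields (after writing $d_i=\pi_i^{-1}$ and absorbing the normalization into $\lambda$) a dual function of the form $g(\lambda)=\lambda^t t_x-\sum_{i\in s}\Lambda_{\nu_i}(\lambda^t d_i x_i)$. Maximizing $g$ is therefore the same as minimizing $\lambda\mapsto\sum_{i\in s}\Lambda_{\nu_i}(\lambda^t d_i x_i)-\lambda^t t_x$, which is exactly the dual objective in the statement; call its minimizer $\hat\lambda$. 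The one-dimensional infimum in each coordinate is attained, by essential smoothness and strict convexity of $\Lambda_{\nu_i}$, at the point where $(\Lambda_{\nu_i}^*)'(\pi_i w_i)=\lambda^t d_i x_i$, and inverting through the functional identity $((\Lambda_{\nu_i}^*)')^{-1}=\Lambda_{\nu_i}'$ gives $\pi_i w_i=\Lambda_{\nu_i}'(\lambda^t d_i x_i)$, i.e.\ $w_i=d_i\Lambda_{\nu_i}'(\lambda^t d_i x_i)$. Evaluating at $\hat\lambda$ produces \eqref{chile3}, and the dual first-order condition $\nabla_\lambda g(\hat\lambda)=0$ reads precisely $N^{-1}\sum_{i\in s}\hat w_i x_i=t_x$, so the recovered $\hat w$ is feasible and satisfies the Karush--Kuhn--Tucker conditions.

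The main obstacle is to justify strong duality, namely the absence of a duality gap and the existence of the dual minimizer $\hat\lambda$, so that the stationarity relation above genuinely characterizes the primal optimum $\hat w$. Here the feasibility hypothesis serves as a Slater-type condition: it furnishes a constraint-satisfying vector $\delta\in\otimes_{i\in s}S_{\nu_i}$ lying in the interior of the effective domain of $\Phi$, which is exactly what is needed to equate the primal and dual values and to guarantee that the primal minimizer is attained and recovered from any dual solution. Attainment of the dual infimum follows from the same essential smoothness and strict convexity assumptions, through which $\Lambda_{\nu_i}'$ is an increasing bijection from the interior of $D(\nu_i)$ onto $S_{\nu_i}$; combined with feasibility this forces the dual objective to be coercive in the relevant directions. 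Rather than reprove this from scratch, I would invoke the standard maximum-entropy-on-the-mean duality theory underlying the references cited in the excerpt, checking only that the present hypotheses meet its requirements.
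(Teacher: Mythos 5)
Your proposal is correct and follows essentially the same route as the paper's proof: both apply Lagrangian duality to the constrained minimization established in Theorem~\ref{prop1}, use the identity $({\Lambda_{\nu_i}^*}')^{-1}=\Lambda_{\nu_i}'$ to solve the per-coordinate stationarity conditions for $\hat w_i = d_i\Lambda_{\nu_i}'(\hat\lambda^t d_i x_i)$, and identify the calibration equation as the first-order condition of the strictly convex dual objective $\lambda\mapsto\sum_{i\in s}\Lambda_{\nu_i}(\lambda^t d_i x_i)-\lambda^t t_x$. The only difference is presentational: the paper works from primal first-order conditions and then recognizes the dual, treating strong duality as implicit in ``a classic convex optimization problem,'' whereas you compute the dual function explicitly via biconjugation and take care to justify the duality gap and attainment of $\hat\lambda$ from feasibility and essential smoothness --- a more careful rendering of the same argument.
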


\noindent We endow $y$ with new weights obtaining the MEM estimator $\hat t_y = N^{-1} \sum_{i \in s} \hat w_i y_i$. We point out that calibration using maximum entropy framework turns into a general convex optimization program, which can be easily solved. Indeed, computing the new weights $w_i, i \in s$, only involves a two step procedure. First, we find the unique $\hat \lambda \in \mathbb R^k$ such that 
$$ N^{-1} \sum_{i \in s} d_i  \Lambda_{\nu_i}' ( \hat \lambda^t d_i x_i) x_i - t_x = 0.$$ 
This is achieved optimizing a scalar convex function. Then, compute the new weights $\hat w_i = d_i  \Lambda_{\nu_i}'(\hat \lambda^t d_i  x_i ) $.

\subsection{Extension to generalized calibration and instrument estimation}\label{sec23}

Proposition \ref{prop2} shows that a calibration estimator is defined using a family of functions $\Lambda_{\nu_i}', i \in s$ satisfying the property that the equation $N^{-1} \sum_{i \in s} d_i  \Lambda_{\nu_i}' (\lambda^t d_i x_i)   x_i \! = \! t_x$ has a unique solution. A natural generalization, known as generalized calibration (GC) (see \cite{MR227827}), consists in replacing the functions $ \lambda \mapsto \Lambda_{\nu_i}' (\lambda^t  d_i x_i)$ by more general functions $f_i: \mathbb R^k \rightarrow \mathbb R, \ i \in s$. Assume that the equation
$$ F(\lambda) =  N^{-1} \sum_{i \in s} d_i f_i(\lambda) x_i = t_x   $$ 
has a unique solution $\hat \lambda$. Assume also that the $f_i$ are continuously differentiable at $0$, and are such that $f_i(0)= 1 $ so that $F(0) = \hat t_x^{HT}$. Then, take as the solution to the generalized calibration procedure, the weights:
$$ \forall i \in s, \ \hat w_i = d_i f_i(\hat \lambda). $$ 
Calibration is of course a particular example of generalized calibration where we set $ f_i: \ \lambda \mapsto \Lambda_{\nu_i}'(\lambda^t d_i x_i)$ to recover a calibration problem seen in Section \ref{sec22}. Even though the method enables a large choice of functions $f_i$, most cases can not be given a probabilistic interpretation. \\
However, an interesting particular choice is given by the functions $\lambda \mapsto 1 + z_i^t \lambda$ for $z_i, i \in s$. This sequence of vectors of $\mathbb R^k$ is called instruments (see \cite{MR227827}). If the matrix $X_n :=N^{-1} \sum_{i \in s} d_i z_i x_i^t$ is invertible, then, the resulting estimator $\hat t_y$, referred to as the instrument estimator obtained with the instruments $z_i$, is given by:
\begin{equation}
\label{inst} \textstyle \hat t_y = \hat t_y^{HT} + (t_x - \hat t_x^{HT})^t X_n^{-1} N^{-1} \sum_{i \in s} d_i z_i y_i.
\end{equation}

\begin{remark}\textbf{: (dimension reduction)} The estimator $\hat t_y$ defined in \eqref{inst}
% built with the instruments $z_i, \ i \in s$ and the calibration constraint $N^{-1} \sum_{i \in s} w_i x_i = t_x$ 
can be viewed as the instrument estimator obtained with auxiliary variable $\hat B^t x$ and instruments $\hat B^t z_i, i \in s$ with $\hat B = \left[  \sum_{i \in s} d_i z_i x_i^t  \right]^{-1} \sum_{i \in s} d_i y_i z_i $. Hence, in the frame of instrument estimation, the original $k$-dimensional calibration constraint can be replaced by a one-dimensional linearly modified one $ N^{-1} \sum_{i \in U} w_i \hat B^t x_i =  \hat B^t t_x$, without changing the value of the estimator. This enables to reduce the dimension of the problem. Furthermore, it gives an interesting interpretation of the underlying process of calibration. For instance, take the instruments $z_i = x_i, i \in s$. The corresponding variable $B^tx$ is the quadratic projection of $y$ onto the linear space $E_x$, spanned by the components of $x$. In other words, $B^tx$ is a linear approximation of $y$. As a result, the variable $y -B^tx$ has a lower variance than $y$, while its mean over the population $\xi$ is known up to $t_y$. So, the variable $y -B^tx$ can be used to estimate $t_y$ and will provide a more efficient estimator. Since $B$ is unknown, we use $\hat B$ to estimate it. Set $\tilde y = y - \hat B^t x$, we have:
$$ \hat t_y - \hat B^tt_x = N^{-1} \sum_{ i \in s} d_i \tilde y_i. $$
The calibrated estimator $\hat t_y$ appears as the Horvitz-Thompson estimator (up to a known additive constant, here $\hat B^tt_x$) of a variable $\tilde y$ with a lower variance than $y$. This points out that calibration relies on linear regression, since an estimator of $t_y$ is computed by first constructing a linear projection $\hat B^tx$ of $y$ on a subspace $E_x$. Reducing the dimension of the problem is made by choosing the proper real-valued auxiliary variable, and therefore, the proper one-dimensional linear subspace on which $y$ is projected. 
\end{remark}

Note also that the accuracy of the estimator heavily relies on the linear correlation between $y$ and the auxiliary variable. It appears that the accuracy could be improved for some non-linear transformation, say $u(x)$, of the original auxiliary variable $x$, provided that $y$ is more correlated with $u(x)$ than $x$. This is discussed in Section \ref{sec3}.\\

Instrument estimators play a crucial role when studying the asymptotic properties of generalized calibration estimation. A classical asymptotic framework in calibration is to consider that $n$ and $N$ simultaneously go to infinity while the Horvitz-Thompson estimators $\hat t_x^{HT}$ and $\hat t_y^{HT}$ converge at a rate of convergence of $\sqrt{n}$, as described in \cite{MR1173804} and \cite{MR2024768} for instance. This will be our framework here. That is
$$ \Vert \hat t_x^{HT} - t_x \Vert = O_{\mathbb P}(n^{-1/2}) \ \text{ and } \ (\hat t_y^{HT} - t_y) = O_{\mathbb P}(n^{-1/2}). $$
In this framework, all GC estimators are $\sqrt n$-consistent, as seen in \cite{MR1173804}. 

\begin{definition} We say that two GC estimators $\hat t_y$ and $\tilde t_y$ are asymptotically equivalent if $(\hat t_y - \tilde t_y) = o_{\mathbb P}(n^{-1/2})$. 
\end{definition}

\begin{proposition}\label{marcel} Let $\hat t_y$ and $\tilde t_y$ be the GC estimators obtained respectively with the functions $f_i, i \in s$ and $g_i, i \in s$. If for all $i \in s$, $\nabla f_i(0) = \nabla g_i(0) = z_i$, and if the matrix $X_n :=N^{-1} \sum_{i \in s} d_i z_i x_i^t$ converges toward an invertible matrix $X$, then $\hat t_y$ and $\tilde t_y$ are asymptotically equivalent. In particular, two MEM estimators are asymptotically equivalent as soon as their prior distributions have the same respective variances.
\end{proposition}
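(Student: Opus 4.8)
The plan is to reduce both generalized calibration estimators to a common first-order expansion governed by the shared gradient $z_i = \nabla f_i(0) = \nabla g_i(0)$, and to absorb their difference into an $o_{\mathbb P}(n^{-1/2})$ remainder. Write $\hat\lambda$ and $\tilde\lambda$ for the unique solutions of $F(\lambda) = N^{-1}\sum_{i \in s} d_i f_i(\lambda) x_i = t_x$ and of its analogue with $g_i$ in place of $f_i$.

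First I would show that $\hat\lambda = O_{\mathbb P}(n^{-1/2})$ together with an explicit leading term. Since $f_i(0) = 1$ we have $F(0) = \hat t_x^{HT}$, hence $F(\hat\lambda) - F(0) = t_x - \hat t_x^{HT} = O_{\mathbb P}(n^{-1/2})$. The Jacobian of $F$ at $0$ is $N^{-1}\sum_{i \in s} d_i x_i \nabla f_i(0)^t = N^{-1}\sum_{i \in s} d_i x_i z_i^t = X_n^t$, which converges to the invertible matrix $X^t$. A local-inversion (implicit function) argument then forces $\hat\lambda \to 0$, and a first-order Taylor expansion of $F$ at $0$ yields $\hat\lambda = (X_n^t)^{-1}(t_x - \hat t_x^{HT}) + o_{\mathbb P}(n^{-1/2})$. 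Because $G$ has the \emph{identical} Jacobian $X_n^t$ at $0$ (this is exactly where $\nabla g_i(0) = z_i$ enters), the same expansion holds for $\tilde\lambda$, so that $\hat\lambda - \tilde\lambda = o_{\mathbb P}(n^{-1/2})$.

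Next I would expand the estimators themselves. Writing $f_i(\hat\lambda) = 1 + z_i^t\hat\lambda + o(\|\hat\lambda\|)$ gives
$$ \hat t_y = N^{-1}\sum_{i \in s} d_i f_i(\hat\lambda) y_i = \hat t_y^{HT} + c_n \hat\lambda + R_n, \qquad c_n := N^{-1}\sum_{i \in s} d_i y_i z_i^t, $$
and likewise $\tilde t_y = \hat t_y^{HT} + c_n \tilde\lambda + \tilde R_n$, where $c_n = O_{\mathbb P}(1)$ as a Horvitz--Thompson-type average of the deterministic quantities $d_i y_i z_i^t$. Subtracting,
$$ \hat t_y - \tilde t_y = c_n(\hat\lambda - \tilde\lambda) + (R_n - \tilde R_n) = o_{\mathbb P}(n^{-1/2}) + (R_n - \tilde R_n), $$
so it remains only to prove the remainders are negligible. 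Equivalently, one checks that each estimator coincides to this order with the instrument estimator \eqref{inst} built from $z_i$, whose closed form serves as the natural reference point.

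The main obstacle is precisely the control of $R_n = N^{-1}\sum_{i \in s} d_i y_i[f_i(\hat\lambda) - 1 - z_i^t\hat\lambda]$: each summand is $o(\|\hat\lambda\|)$ pointwise, but the number of terms grows with $n$, so I must upgrade the mere differentiability of the $f_i$ at $0$ to a uniform-in-$i$ bound, invoking the regularity of the families $(f_i)$ and the moment conditions on $(d_i, x_i, y_i, z_i)$ that underlie the asymptotic framework of \cite{MR1173804}. Once $R_n, \tilde R_n = o_{\mathbb P}(n^{-1/2})$, the equivalence follows. Finally, for the MEM specialization I would invoke Proposition \ref{prop2}, for which $f_i(\lambda) = \Lambda_{\nu_i}'(\lambda^t d_i x_i)$, so that $\nabla f_i(0) = \Lambda_{\nu_i}''(0)\, d_i x_i$; since $\Lambda_{\nu_i}''(0)$ is the variance of $\nu_i$, two priors with equal respective variances produce equal gradients $z_i$, and the general statement applies verbatim.
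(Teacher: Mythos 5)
Your proposal follows essentially the same route as the paper's proof: define $F$ and $G$, expand at $0$, use the convergence of $X_n$ to an invertible limit to obtain $\hat\lambda = X_n^{-1}(t_x - \hat t_x^{HT}) + o_{\mathbb P}(n^{-1/2})$ and hence $\hat\lambda - \tilde\lambda = o_{\mathbb P}(n^{-1/2})$, Taylor-expand $f_i$ and $g_i$ to compare the estimators, and for the MEM specialization identify $\nabla f_i(0) = \Lambda_{\nu_i}''(0)\, d_i x_i$ with $\Lambda_{\nu_i}''(0)$ the prior variance. If anything, your write-up is slightly more careful than the paper's: you correctly note that the Jacobian of $F$ at $0$ is $X_n^t$ rather than $X_n$, and you explicitly flag the need for a uniform-in-$i$ bound on the Taylor remainders $R_n$, a point the paper's proof passes over silently.
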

This proposition is a consequence of Result $3$ in \cite{MR1173804}. It states that for all GC estimator, there exists an instrument estimator having the same asymptotic behavior, built by taking as instruments the gradient vectors of the criterion functions at $0$: $z_i = \nabla f_i(0), i \in s$. Consequently, a MEM estimator $\hat t_y$ built with prior distributions $\nu_i, i \! \in \! s$ with mean $1$ and respective variances $\pi_i q_i$ for $(q_i)_{ i \in s}$ a given positive sequence, satisfies
$$ \textstyle \hat t_y = \hat t_y^{HT} + (t_x - \hat t_x^{HT})^t \hat B + o_\mathbb P(n^{-1/2}) $$
where $\hat B = \left[\sum_{i \in s} d_i q_i x_i x_i^t \right]^{-1} \sum_{i \in s} d_i q_i x_i y_i$. The negligible term $o_\mathbb P(n^{-1/2})$ is zero for all $n$ for Gaussian priors $\nu_i \sim \mathcal N(1,\pi_i q_i)$, 
which stresses the important role played by the corresponding $\chi^2$ dissimilarity (see Example 1 in Section \ref{sec22bis}). Note also that the Gaussian equivalent $\tilde t_y = \hat t_y^{HT} + (t_x - \hat t_x^{HT})^t \hat B$ is the instrument estimator built with the instruments $z_i = q_i x_i$. This choice of instruments, and in particular the case $q_i = 1$ for all $i \in s$, is often used in practice due to its simplicity and good consistency.

\section{Efficiency of calibration estimator with MEM method}\label{sec3}
By using the auxiliary variable $x$ in the calibration constraint, we implicitly assume that $x$ and $y$ are linearly related. However, other relationships may prevail between the variables and it may be more accurate to consider some other auxiliary variable $u(x)$. Here, we discuss optimal choices of function $u: \mathcal X \rightarrow \mathbb R^d$ to use in the calibration constraint. To do so, we first define a notion of asymptotic efficiency in our model with fixed auxiliary variable $u(x)$. Then, we  study the influence of the choice of the constraint function $u$ and find the optimal choice leading to the most efficient estimator. Finally, we propose a method based on the approximate maximum entropy on the mean which enables to compute an asymptotically optimal estimate of $t_y$, taking into consideration both the choice of the constraint function $u$ and the instruments $z_i$.  

\subsection{Asymptotic efficiency}\label{sec31}
In order to choose between calibration estimators, we now define a notion of asymptotic efficiency for a given calibration constraint. Although a GC estimator is entirely determined by a family $f_i, i \in s$ of functions, only the values $z_i = \nabla f_i(0), i \in s$ matter to study the asymptotic behavior of the estimator, up to a negligible term of order $o_\mathbb P(n^{-1/2})$. Let $u: \mathcal X \rightarrow \mathbb R^d$ be a given function, and consider:
$$  t_u = N^{-1} \sum_{i \in U} u(x_i), \ \hat t_{u \pi} = N^{-1} \sum_{i \in s} d_i u(x_i). $$
We make the following assumptions.
\begin{description}
\item \textbf{A1}: $\!\!$ $\xi := \left\lbrace (x_i,y_i), i \in U \right\rbrace $ are independent realizations of  $(X,Y)$, with $\mathbb{E}(Y \vert X) \neq \mathbb{E}(Y )$ and $\mathbb E ( \vert Y^3 \vert) < \infty$. Note respectively $P_X$ and $P_{XY}$ the distributions of $X$ and $(X,Y)$.
\item \textbf{A2}: The sampling design $p(.)$ does not depend on $\xi$.
\item \textbf{A3}: $ n $ and $N/n$ tend to infinity. This will be denoted by $(n, N/n) \rightarrow \infty$. 
\end{description} 
Furthermore, $u$ is assumed to be measurable and such that $\mathbb E ( \Vert u(X)^3 \Vert) < \infty$. Given the constraint function $u$ and instruments $z_i, i \in s$, we note $\hat t_y(u)$ the resulting instrument estimator, the dependency in $z_i$ is dropped for ease of notation. We now study the asymptotic behavior of $\hat t_y(u)$ with respect to the instruments $z_i, i \in s$. Here, the weights $\hat w$ are adapted to the new calibration constraint $ N^{-1} \sum_{i \in U}  \hat w_i u(x_i) = t_u$, yielding
$$ \hat t_y(u) = N^{-1} \sum_{i \in U} \hat w_i y_i = \hat t_y^{HT} + (t_u - \hat t_{u \pi})^t \hat B_u, $$
where $\hat B_u = \left[ \sum_{i \in s} d_i  z_i u(x_i)^t  \right]^{-1}  \sum_{i \in s} d_i y_i z_i $ is assumed to be well defined and to converge in probability towards a constant vector $B_u$ as $(n, N/n) \rightarrow \infty$.\\

In order to define a criterion of efficiency, we first need to construct an asymptotic variance lower bound for instrument estimators. Note $ \mathbb{E}_\xi ( t_y - \hat t_{y}(u) )^2  $ the quadratic risk of $\hat t_y(u)$ under $p$, the population $\xi$ being fixed, we aim to determine a lower bound for the limit of $ n \mathbb{E}_\xi ( t_y - \hat t_{y}(u) )^2$ as $(n, N/n) \rightarrow \infty$ (provided that the limit exists). The value of the limit of course heavily relies on the asymptotic behavior of the sampling design. Without some control on the Horvitz-Thompson weights $\pi_i$, we can not derive consistency properties for instrument estimators. Note $\pi_{ij} = \sum_{s: \ i,j \in s} p(s)$ the joint inclusion probability of $i$ and $j$ and let $\Delta_{ij} = \pi_{ij} d_i d_j - 1    $, we make the following technical assumptions.
\begin{description}
\item \textbf{A4}: $  \textstyle \sum_{i \in U} \Delta_{ii}^2 = o(N^4 n^{-2})$, $ \textstyle \sum_{i \in U} \sum_{j \neq i } \Delta_{ij}^2 = o( N^3 n^{-2})$.
\item \textbf{A5}: $ \! \! \displaystyle \lim_{n \rightarrow \infty \atop N/n \rightarrow \infty} \textstyle  n N^{-2}  \sum_{i \in U} \Delta_{ii} = -  \displaystyle \lim_{{n \rightarrow \infty \atop N/n \rightarrow \infty} } \textstyle n N^{-2} \sum_{i \in U} \sum_{j \neq i } \Delta_{ij} = 1$.
\end{description}
Assumption 4 is sufficient to ensure that the HT estimator of some variable $ a(x_i,y_i), i \in U $ is $\sqrt{n}$-consistent provided that $\mathbb E( a(X,Y)^2) < \infty$. Furthermore, Assumption 5 ensures the existence of its asymptotic variance. Note that these assumptions do not take into consideration the population $\xi$, so that it makes them easy to check in practical cases. For example, the assumptions are fulfilled for the uniform sampling design, that is when $p$ is such that every sample $s \subset U$ has the same probability of being observed. In that case, the Horvitz-Thompson weights are $ \pi_i = n / N$ and $ \pi_{ij} = n(n-1)/N(N-1),  \forall i \! \neq \! j$, yielding $ \Delta_{ii}= N/n -1$ and $ \Delta_{ij} = - (N - n) / n(N - 1)$. We can now state our first result.\\

\noindent Lemma 1: \textit{Suppose that Assumptions 1 to 4 hold. Then, }
 $$  n \mathbb{E}_\xi ( t_y - \hat t_{y}(u) )^2 \geq \text{var}\left( Y - B_u^t u(X) \right) + o_\mathbb P(1),$$
\textit{with equality if, and only if, Assumption 5 also holds.}\\

We point out that an asymptotic lower bound for the variance can be defined for instrument estimators as soon as Assumptions 1 to 4 hold. The lower bound (denoted by $V^*(u)$) is the minimum of $\text{var}(Y-B^t u(X))$ for $B$ ranging over $\mathbb R^d$. It can be computed explicitly if the matrix var$(u(X))$ is invertible: 
$$ V^*(u) = \text{var}\left( Y - \text{cov}(Y,u(X)) ^t \left[ \text{var}(u(X)) \right]^{-1} u(X) \right). $$
We say that an estimator $\hat t_y(u)$ is asymptotically efficient if its asymptotic variance is $V^*(u)$. Note that this lower bound can not be reached if Assumption 5 is not true. We now come to our second result.\\

\noindent Lemma 2: \textit{Suppose that Assumptions 1 to 5 hold. If} var$(u(X))$ \textit{is invertible, $\hat t_y(u)$ built with instrument $z_i, i \in s$ is asymptotically efficient if, and only if, }
\begin{eqnarray}\label{eq6}  \lim_{(n, N/n) \rightarrow + \infty} \textstyle  \left[ \sum_{i \in s} d_i  z_i u(x_i)^t \right]^{-1} \sum_{i \in s} d_i y_i z_i =   \left[ \text{var}(u(X)) \right]^{-1} \text{cov}(Y,u(X)). 
\end{eqnarray} 

In an asymptotic concern and when the calibration function $u$ is fixed, finding the best instruments $z_i, i \! \in \!  s$ in order to estimate $t_y$ becomes a simple optimization problem which depends only on the limit $B_u$ of $\hat B_u = \left[ \sum_{i \in s} d_i  z_i u(x_i)^t \right]^{-1}  \sum_{i \in s} d_i y_i z_i $. Asymptotic efficiency is obtained by choosing instruments minimizing the asymptotic variance. Hence, calculating $B_u$ provides an efficient and easy way to prove the asymptotic efficiency of an instrument estimator. Moreover, this criterion of asymptotic efficiency can be extended to the set of all generalized calibration estimators, as a consequence of Proposition \ref{marcel}. A GC estimator defined by the functions $f_i, i \! \in \! s$ is asymptotically efficient if and only if the vectors $z_i = \nabla f_i(0), i \! \in \! s$ satisfy \eqref{eq6}.\\

\noindent \textit{Proof of Lemmas 1 and 2:} First compute the quadratic risk of $\hat t_y(u)$. Due to its non linearity it is a difficult task. We rather consider its linear asymptotic expansion $\hat t_{y, \lin}(u) := \hat t_y^{HT} + (t_u - \hat t_{u \pi})B_u$ where we recall that $B_u$ is the limit (in probability) of $\hat B_u$. Note that the random effect is due to the sampling design $p$, the population $\xi$ is fixed. We obtain after calculation the following expression for the quadratic risk
$$ \mathbb{E}_\xi ( t_y - \hat t_{y, \lin}(u) )^2 = N^{-2}  \sum_{i,j \in U} \Delta_{ij} \ (y_i - B_u^t u(x_i))(y_j - B_u^t u(x_j)) . $$
Then, the results follow directly from Lemma \ref{jose}, given in the Appendix. \qed \\
\noindent We now see some examples of well-used estimators.\\

\noindent \textbf{Asymptotic variance of some GC estimators}
\begin{enumerate}
\item Optimal instruments.\\
Assume for sake of simplicity that $u$ is real-valued. We denote by $B_{u}^{min}$ the value of $B_u$ achieving the minimal value of the quadratic risk:
$$  B_{u}^{min} = \textstyle \dfrac{ \sum_{i,j \in U} \Delta_{ij} \ u(x_j) y_i}{\sum_{i,j \in U} \Delta_{ij} \ u(x_i) u(x_j)} = \dfrac{\sum_{i \in U} y_i ( \sum_{j \in U} \Delta_{ij} \ u(x_j) ) }{\sum_{i \in U} u(x_i) ( \sum_{j \in U} \Delta_{ij} \ u(x_j) )}.  $$
The corresponding instruments are $ z_i = \sum_{j \in U} \textstyle \Delta_{ij} \ u(x_j), \forall i$. By Lemma \ref{jose}, we see that $B_u^{min}$ converges toward $\text{cov}(Y,u(X))/\text{var}(u(X))$ as $(n,N/n) \rightarrow \infty$, Equation \eqref{eq6} is thus true in that case. If the sampling design is uniform, we obtain after calculation $z_i =\frac{N(N - n)}{n(N-1)} (u(x_i) - t_u)$, and we have:
$$ B_u^{min} = \dfrac{\textstyle \sum_{i \in U} y_i z_i}{ \sum_{i \in U}  z_i u(x_i)} = \dfrac{\text{cov}_e(y,u(x))}{ \text{var}_e(u(x))} $$
where $\text{cov}_e$ and $\text{var}_e$ denote the empirical covariance and variance for the population $\xi$ given by 
$ \text{cov}_e(y,u(x)) = N^{-1} \sum_{i \in U} y_i(u(x_i) - t_u)$ and $\text{var}_e(u(x)) =  \text{cov}_e(u(x),u(x))$. Finally, 
$$  n  \mathbb{E}_\xi ( t_y - \hat t_{y, \lin} )^2 = ( 1 - n N^{-1} ) \ \text{var}_e \left( y - \dfrac{\text{cov}_e(y,u(x))}{\text{var}_e(u(x))} u(x) \right) + o(1).    $$
We have $\lim_{(n,N/n) \rightarrow \infty} n  \mathbb{E}_\xi ( t_y - \hat t_{y, \lin} )^2 = V^*(u)$, as expected. This estimator is thus asymptotically efficient. Although, instruments used for its computation depend on the whole population $(x_i)_{i \in U}$ and therefore, they may be computationally expensive.\\

\item MEM estimators.\\
Take the instruments $z_i = q_i u(x_i), \forall i \in s $ for $(q_i)_{i \in s}$ a positive sequence. As seen in Section \ref{sec22}, these instruments describe the asymptotic behavior of MEM estimators built using prior distributions $\nu_i$ with respective variances $\pi_i q_i$. Even though this choice is often used in practical cases, we see that it does not necessarily lead to an asymptotically efficient estimator $\hat t_y(u)$. Indeed, under regularity conditions on $q_i$ which ensure the convergence of $ \hat B_u$ (basically, the assumptions of Proposition \ref{memop}, which are true for instance if we take $q_i = 1$), we have: 
$$ \hat B_u = \textstyle \left[ \sum_{i \in s} d_i q_i u(x_i)u(x_i)^t \right]^{-1} \sum_{i \in s} d_i q_i y_i u(x_i) \overset{\mathbb P}{\longrightarrow} \left[ \mathbb E(u(X)u(X)^t) \right]^{-1} \mathbb E(Y u(X)) . $$
These instruments satisfy Equation \eqref{eq6} only if 
$$\left[ \mathbb E(u(X)u(X)^t) \right]^{-1} \mathbb E(Y u(X)) = \left[ \text{var}(u(X) \right]^{-1}\text{cov}(Y,u(X)).$$
This is true when $u(.) = \mathbb E (Y \vert X=.)$ or for any $u$ such that $\mathbb E(u(X))=0$, MEM estimators are thus asymptotically efficient in these cases. When this condition is not fulfilled, an easy method to compute an efficient estimator consists in adding the constant variable $1$ in the calibration constraint. We then consider the MEM estimator $\hat t_y(v)$ where $v = (1,u)^t: \mathcal X \rightarrow \mathbb R^{d+1}$, the calibrated weights now satisfy the constraints 
$$ N^{-1} \sum_{i \in s}  w_i u(x_i) = t_u, \ N^{-1} \sum_{i \in s}  w_i = 1 .$$ 
Here, the matrix $\text{var}(v(X))$ is not invertible although we see after a direct calculation that $V^*(v) = V^*(u)$. So, the auxiliary variable is modified but the asymptotic lower bound is unchanged. Furthermore, the MEM estimator $\hat t_y(v)$ obtained in this way is asymptotically efficient, as it is proved in the following proposition.
\end{enumerate}

\begin{proposition}\label{memop} Suppose that Assumptions 1 to 5 hold. Let $(\nu_i)_{i \in s}$ be a family of probability measures with mean $1$ and respective variance $q_i \pi_i$ with $(q_i)_{i \in s}$ a given positive sequence. Assume that there exists $ \kappa = \kappa(n,N) \in \mathbb R$ such that $\kappa \sum_{i \in s} q_i d_i$ is bounded away from zero and $\kappa^2 \sum_{i \in s} (q_i d_i)^2 \rightarrow 0$ as $(n,N/n) \rightarrow + \infty$. Let $v =(1,v_1,...,v_d): \mathcal X \rightarrow \mathbb R^{d+1}$ be a map, where $1,v_1,...,v_d$ are linearly independent. Then, the MEM estimator built with prior distribution $\nu = \otimes_{i \in s} \nu_i$ and calibration constraint $N^{-1} \sum_{i \in s} w_i v(x_i) = t_v$ is asymptotically efficient.
\end{proposition}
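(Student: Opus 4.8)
The plan is to reduce the augmented MEM estimator to an instrument estimator, identify the limit of its regression coefficient, and then exploit the constant component of $v$ to recognize that limit as the variance-minimizing coefficient. First I would reduce to instruments. By Proposition \ref{marcel} and the remark following it, the MEM estimator built with priors $\nu_i$ of mean $1$ and variance $q_i \pi_i$ and with the constraint $N^{-1}\sum_{i\in s} w_i v(x_i) = t_v$ is asymptotically equivalent to the instrument estimator $\hat t_y(v)$ obtained with instruments $z_i = q_i v(x_i)$ (indeed $\nabla f_i(0) = \Lambda_{\nu_i}''(0)\, d_i v(x_i) = q_i v(x_i)$), whose regression matrix is
$$\hat B_v = \Big[\textstyle\sum_{i \in s} d_i q_i v(x_i) v(x_i)^t\Big]^{-1} \sum_{i \in s} d_i q_i v(x_i) y_i.$$
Since asymptotic efficiency is unaffected by an $o_{\mathbb P}(n^{-1/2})$ perturbation, it suffices to prove $\hat t_y(v)$ efficient. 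By Lemma 1 (using Assumption 5 for the equality case) its asymptotic variance is $\text{var}(Y - B_v^t v(X))$ with $B_v = \lim \hat B_v$, so the goal becomes $\text{var}(Y - B_v^t v(X)) = V^*(v)$. Note that Lemma 2 cannot be invoked directly, since $\text{var}(v(X))$ is singular.

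Next I would identify $B_v$. Dividing numerator and denominator by $\sum_{j \in s} q_j d_j$, I rewrite $\hat B_v = [\sum_{i\in s} w_i v(x_i) v(x_i)^t]^{-1} \sum_{i\in s} w_i v(x_i) y_i$ with normalized weights $w_i = q_i d_i / \sum_{j\in s} q_j d_j$, so that the free scale $\kappa$ cancels. The hypotheses on $\kappa$ give $\sum_{i\in s} w_i^2 = \kappa^2 \sum_{i\in s}(q_i d_i)^2 / (\kappa \sum_{i\in s} q_i d_i)^2 \to 0$, hence $\max_{i\in s} w_i \to 0$. Conditioning on the sampling design, which is independent of $\xi$ by Assumption 2, the variables $(x_i, y_i)_{i\in s}$ are i.i.d.\ with law $P_{XY}$ and independent of the $w_i$; the weighted weak law of large numbers (valid when $\max_i w_i \to 0$ and $\sum_i w_i = 1$, the integrability being supplied by Assumption 1) then yields $\sum_{i\in s} w_i v(x_i) v(x_i)^t \to \mathbb E(v(X) v(X)^t)$ and $\sum_{i\in s} w_i v(x_i) y_i \to \mathbb E(v(X) Y)$ in probability. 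Since $1, v_1, \ldots, v_d$ are linearly independent, the second-moment matrix $\mathbb E(v(X) v(X)^t)$ is positive definite, hence $\hat B_v \to B_v := [\mathbb E(v(X) v(X)^t)]^{-1}\mathbb E(v(X) Y)$ in probability.

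Finally I would invoke the constant component to conclude. The vector $B_v$ minimizes $B \mapsto \mathbb E[(Y - B^t v(X))^2]$, so the normal equations $\mathbb E[v(X)(Y - B_v^t v(X))] = 0$ hold; reading off the coordinate attached to the constant component $1$ of $v$ gives $\mathbb E[Y - B_v^t v(X)] = 0$, i.e.\ the limiting residual is centered. Consequently $\text{var}(Y - B_v^t v(X)) = \mathbb E[(Y - B_v^t v(X))^2] = \min_B \mathbb E[(Y - B^t v(X))^2]$. For any $B$, shifting only the intercept coordinate leaves $\text{var}(Y - B^t v(X))$ unchanged and can be chosen so as to center the residual, whence $\text{var}(Y - B^t v(X)) \geq \min_{B'} \mathbb E[(Y - (B')^t v(X))^2]$; taking the minimum over $B$ shows $\min_B \text{var}(Y - B^t v(X)) = \min_B \mathbb E[(Y - B^t v(X))^2]$. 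Therefore $\text{var}(Y - B_v^t v(X)) = V^*(v)$, so $\hat t_y(v)$, and with it the MEM estimator, is asymptotically efficient.

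I expect the main technical obstacle to be the convergence of $\hat B_v$: one must handle the two sources of randomness simultaneously, using Assumption 2 to condition on the design and the technical $\kappa$-hypothesis to force $\max_i w_i \to 0$ so that the weighted law of large numbers applies. By contrast, the decisive conceptual point is the elementary observation that the constant component of $v$ forces the limiting residual to be centered, which is exactly why augmenting the constraint with the variable $1$ restores efficiency even though $\text{var}(v(X))$ is singular.
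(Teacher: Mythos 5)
Your proof is correct and takes essentially the same route as the paper: use the $\kappa$-normalization to identify the limit $B_v = \left[\mathbb E(v(X)v(X)^t)\right]^{-1}\mathbb E(Y v(X))$ of $\hat B_v$, then exploit the constant component of $v$ to identify $\text{var}(Y - B_v^t v(X))$ with the variance lower bound. Your normal-equations/centering argument is just a rephrasing of the paper's algebraic identity $B_v^t v(\cdot) = B_u^t u(\cdot) + K$ with $K$ constant, and your weighted law of large numbers step merely fills in details that the paper asserts directly.
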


\subsection{Approximate Maximum Entropy on the Mean}\label{sec32}

We now turn on the optimal choice of the auxiliary variable $u(x)$ defining the calibration constraint. For a given constraint function $u$, we implicitly take asymptotically optimal instruments $z_i, i \! \in \! s$, that is, instruments such that the resulting estimator $\hat t_y(u)$ has asymptotic variance $V^*(u)$. Hence, minimizing the asymptotic variance of GC estimators with respect to $u$ and $(z_i)_{i \in s}$ reduces to minimizing $V^*(u)$ with respect to $u$.\\
In an asymptotic framework, $u$ can be taken with values in $\mathbb R$ without loss of generality, as discussed in Section \ref{sec23}. So, for a real valued constraint function $u$, $V^*(u)$ is defined as:
$$   V^*(u) = \underset{B \in \mathbb R}{\text{inf}} \ \text{var}(Y - B u(X)) = \text{var}\left( Y - \dfrac{\text{cov}(Y,u(X))}{\text{var}(u(X))} u(X) \right) .    $$
A function $v$ for which $V^*(v)$ is minimal over the set $\sigma_X$ of all real $X$-measurable functions has the form $v(.) = \alpha \mathbb E(Y \vert X = .) + \beta $ for some $(\alpha,\beta) \in \mathbb R^* \times \mathbb R$. Hence, the conditional expectation $\Phi(x) = \mathbb E(Y \vert X=x)$ (or any bijective affine transformation of it) turns out to be the best choice for the auxiliary variable in term of asymptotic efficiency. In that case, the asymptotic lower bound is given by:
$$ V^* = \underset{u \in \ \sigma_X}{\text{min}} V^*(u) = \mathbb E (Y - \mathbb E (Y \vert X))^2.$$
 
For practical applications, this result is useless since the conditional expectation $\Phi$ depends on the unknown distribution of $(X,Y)$. If $\Phi$ were known, the problem of estimating $t_y$ would be easier since the observed value $t_\Phi = N^{-1} \sum_{i \in U}  \Phi(x_i)$ is a $\sqrt N$-consistent estimator of $t_y$ and is therefore much more efficient than any calibrated estimator. When the conditional expectation $\Phi$ is unknown, a natural solution is to replace $\Phi$ by an estimate $\Phi_m$, and then plug it into the calibration constraint. Under regularity conditions that will be made precise later, we show that this approach enables to compute an asymptotically optimal estimator of $t_y$, in the sense that its asymptotic variance is equal to the lower bound $V^*$ defined above. \\

For all measurable function $u$, we now denote by $\hat t_y(u)$ the MEM estimator of $t_y$ obtained with prior distributions $\nu_i \sim \mathcal N(1,\pi_i)$ and auxiliary variables $u(x)$ and $1$. We recall that $\hat t_y(u)$ is $\sqrt n$-consistent with asymptotic variance $V^*(u)$, as shown in Proposition \ref{memop}. Moreover, we know that the asymptotic variance of MEM estimators $\hat t_y(u)$ is minimal for the unknown value $u = \Phi$. The AMEM procedure consists in replacing $\Phi$ by its approximation $\Phi_m$ in the calibration constraint. The so-obtained AMEM estimator $\hat t_y(\Phi_m)$ is thus quite easily computable but still verifies interesting convergence properties as shown in the next proposition.

\begin{proposition}\label{AMEM} Suppose that Assumptions 1 to 5 hold. Let $(\Phi_m)_{m \in \mathbb N}$ be a sequence of functions independent with $\xi$ and such that
$$ \mathbb E (\Phi(X) - \Phi_m(X))^2 = O(\varphi_m ^{-1}) \text{ with } \lim_{m \rightarrow \infty} \varphi_m = + \infty.$$
Then, the AMEM estimator $\hat t_y(\Phi_m)$ is asymptotically optimal among all GC estimators in the sense that $n \mathbb E_\xi(t_y - \hat t_y (\Phi_m))^2  $ converges toward $V^*$ as $n, N/n,m \rightarrow \infty$.
\end{proposition}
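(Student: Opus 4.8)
The plan is to split the statement into two essentially independent phenomena---the asymptotic efficiency of $\hat t_y(\Phi_m)$ for each fixed $m$, and the convergence of the corresponding variance bound $V^*(\Phi_m)$ toward $V^*$---and then to merge the two limits. The starting point is that, because the prior is Gaussian, $\nu_i\sim\mathcal N(1,\pi_i)$, the MEM estimator $\hat t_y(\Phi_m)$ coincides \emph{exactly} (with no $o_{\mathbb P}(n^{-1/2})$ correction, as noted after Proposition~\ref{marcel}) with the instrument estimator built from $z_i=v(x_i)$, where $v=(1,\Phi_m)$. For a \emph{fixed} non-constant $\Phi_m$, Proposition~\ref{memop} applies with $q_i=1$ (its hypotheses hold e.g.\ with $\kappa=1/N$, since $\sum_{i\in s}d_i\asymp N$ and $\sum_{i\in s}d_i^2\asymp N^2/n$), so that $\hat t_y(\Phi_m)$ is asymptotically efficient and $n\,\mathbb E_\xi(t_y-\hat t_y(\Phi_m))^2\to V^*(v)=V^*(\Phi_m)$ in probability as $(n,N/n)\to\infty$.

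The second ingredient is the continuity of the bound. Using $V^*(u)=\mathrm{var}(Y)-\mathrm{cov}(Y,u(X))^2/\mathrm{var}(u(X))$ together with the hypothesis $\mathbb E(\Phi(X)-\Phi_m(X))^2=O(\varphi_m^{-1})$, i.e.\ $\Phi_m\to\Phi$ in $L^2(P_X)$, the Cauchy--Schwarz inequality yields $\mathrm{cov}(Y,\Phi_m(X))\to\mathrm{cov}(Y,\Phi(X))$ and $\mathrm{var}(\Phi_m(X))\to\mathrm{var}(\Phi(X))$, both at rate $O(\varphi_m^{-1/2})$. Assumption~\textbf{A1} ($\mathbb E(Y\mid X)\neq\mathbb E(Y)$) guarantees $\mathrm{var}(\Phi(X))>0$, so the denominator is bounded below for large $m$, and hence $V^*(\Phi_m)\to V^*(\Phi)=V^*$ with an explicit $O(\varphi_m^{-1/2})$ rate; here one uses the identity $\mathrm{cov}(Y,\Phi(X))=\mathrm{var}(\Phi(X))$, whence $V^*(\Phi)=\mathrm{var}(Y-\Phi(X))=V^*$.

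To pass from these two iterated limits to the announced joint limit, I would not treat Proposition~\ref{memop} as a black box but re-run its risk computation while tracking the $m$-dependence, aiming for $n\,\mathbb E_\xi(t_y-\hat t_y(\Phi_m))^2=V^*(\Phi_m)+r_{n,N,m}$ with $r_{n,N,m}=o_{\mathbb P}(1)$ as $(n,N/n)\to\infty$ \emph{uniformly} in $m$. Concretely I would linearize, replacing $\hat t_y(\Phi_m)$ by $\hat t_{y,\lin}(\Phi_m)=\hat t_y^{HT}+(t_v-\hat t_{v\pi})^t B_{\Phi_m}$; the dropped term is a product of two $O_{\mathbb P}(n^{-1/2})$ factors and so contributes $o_{\mathbb P}(1)$ to the scaled risk, then insert the quadratic-risk identity
$$ \mathbb E_\xi(t_y-\hat t_{y,\lin}(\Phi_m))^2=N^{-2}\sum_{i,j\in U}\Delta_{ij}\,\bigl(y_i-B_{\Phi_m}^t v(x_i)\bigr)\bigl(y_j-B_{\Phi_m}^t v(x_j)\bigr) $$
into Lemma~\ref{jose}, whose conclusion (under \textbf{A4}--\textbf{A5}) is that $n$ times the right-hand side tends in probability to $\mathrm{var}(Y-B_{\Phi_m}^t v(X))=V^*(\Phi_m)$. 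The main obstacle is exactly the uniformity in $m$ needed to synchronize the two limits: since $\Phi_m$ is independent of $\xi$, one may condition on $\Phi_m$ and treat it as a deterministic constraint, after which uniform control of the remainder reduces to uniform bounds on the moments of $\Phi_m(X)$ and to keeping $\mathrm{var}(\Phi_m(X))$ away from zero---both consequences of $\Phi_m\to\Phi$ in $L^2$ and $\mathrm{var}(\Phi(X))>0$. I expect the delicate point to be transferring the third-order moment control required by \textbf{A1} and Lemma~\ref{jose} from a single fixed function to the whole varying family $(\Phi_m)$; the explicit rate $O(\varphi_m^{-1})$ is what one would exploit to close these uniform estimates and to bound $|V^*(\Phi_m)-V^*|$.
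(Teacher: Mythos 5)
Your two building blocks are individually sound: for fixed $m$, Proposition~\ref{memop} (whose hypotheses you verify correctly for $q_i=1$) gives $n\,\mathbb E_\xi(t_y-\hat t_y(\Phi_m))^2\to V^*(\Phi_m)$, and your continuity argument $V^*(\Phi_m)\to V^*$ via Cauchy--Schwarz, $\mathrm{cov}(Y,\Phi(X))=\mathrm{var}(\Phi(X))$ and $\mathrm{var}(\Phi(X))>0$ is correct. The gap is exactly where you place it, and your proposed repair does not close it. To merge the two iterated limits into the joint limit you need the fixed-$m$ convergence to be uniform in $m$, and your plan is to re-run Lemma~\ref{jose} on the $m$-dependent functions $f_m = y - B_{\Phi_m}^t v(x)$, asserting that the required ``uniform bounds on the moments of $\Phi_m(X)$'' are ``consequences of $\Phi_m\to\Phi$ in $L^2$''. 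That assertion fails: membership in the class $\mathcal F$ of Lemma~\ref{jose} requires finite \emph{third} moments, and making the $o_{\mathbb P}(1)$ terms in its proof (replacement of $N^{-1}\sum_i f_i^2$ by $\mathbb E f^2$, of the cross terms by $(\mathbb E f)^2$, etc.) uniform over the family $(f_m)_m$ requires uniform third-moment control. Convergence in $L^2$ at rate $\varphi_m^{-1}$ gives uniformly bounded second moments of $\Phi_m(X)$, but says nothing about third moments, which can grow without bound along the sequence while the hypothesis of the proposition still holds. So the decisive step of your argument is not established, and cannot be reached from the stated hypotheses by the route you describe.

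The paper's proof avoids ever needing an $m$-uniform version of Lemma~\ref{jose} by a different decomposition: it centers the estimator on the oracle, writing $\hat t_y(\Phi_m) = \hat t_y^{HT} + (t_\Phi - \hat t_{\Phi\pi}) + \bigl(\hat t_{\Phi\pi} - t_\Phi - (\hat t_{\Phi_m\pi} - t_{\Phi_m})\bigr) + (\hat B_{\Phi_m}-1)(t_{\Phi_m}-\hat t_{\Phi_m\pi})$. The point of this decomposition is that the error terms involve only the \emph{difference} $\Phi-\Phi_m$ (through Horvitz--Thompson estimation of that variable) and $\hat B_{\Phi_m}-1$; since one only needs upper bounds on these, not exact limits, second moments of the difference suffice, and because $\Phi_m$ is independent of $\xi$ one obtains $n\,\mathbb E_\xi\bigl(\hat t_{\Phi\pi} - t_\Phi - (\hat t_{\Phi_m\pi} - t_{\Phi_m})\bigr)^2 = O_{\mathbb P}(\varphi_m^{-1})$ and $\hat B_{\Phi_m}-1 = O_{\mathbb P}(\varphi_m^{-1/2})$ uniformly as $n,N/n\to\infty$ (this is where the rate hypothesis is consumed, with the details delegated to the proof of Lemma~1 in \cite{l2}). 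Lemma~\ref{jose} is then applied a single time, to the fixed, $m$-free function $y-\Phi(x)$ (using $B_\Phi=1$), which yields the limit $V^*$. If you want to salvage your approach, the fix is essentially to adopt this decomposition: compare the estimator $\hat t_y(\Phi_m)$ to the oracle estimator directly, rather than comparing the per-$m$ limit $V^*(\Phi_m)$ to $V^*$.
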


When applied to this context, approximate maximum entropy on the mean enables to increase the efficiency of calibration estimators when an additional information is available, namely, an external estimate of the conditional expectation function $\Phi$ is observed. Nevertheless, in our model, it is possible to obtain similar properties under weaker conditions. 

\begin{corollary}\label{AMEM2} Suppose that Assumptions 1 to 5 hold. Let $(\Phi_m)_{m \in \mathbb N}$ be a sequence of functions satisfying
$$ i) \ n \mathbb E_\xi(\hat t_{\Phi \pi} \!  - t_{\Phi} - (\hat t_{\Phi_m \pi}  - t_{\Phi_m}))^2 \stackrel{\mathbb P} { \underset{(n,N/n,m) \rightarrow \infty}{\longrightarrow} } 0 \ \text{ and } \ ii) \ 
 \hat B_{\Phi_m} \stackrel{\mathbb P} { \underset{(n,N/n,m) \rightarrow \infty}{\longrightarrow} } 1.  $$ 
Then, the estimator $\hat t_y (\Phi_m)$ is asymptotically efficient.
\end{corollary}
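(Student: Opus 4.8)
The plan is to benchmark $\hat t_y(\Phi_m)$ against the \emph{oracle} MEM estimator $\hat t_y(\Phi)$ built from the true conditional expectation $\Phi$, and to show the two are asymptotically equivalent. By Proposition~\ref{memop} applied to the map $v=(1,\Phi)$ (the components $1,\Phi$ being linearly independent since $\mathbb E(Y\vert X)\neq\mathbb E(Y)$ by Assumption~1), the oracle is asymptotically efficient with asymptotic variance $V^*(\Phi)=V^*$. Its limiting regression coefficient is $B_\Phi=\mathrm{cov}(Y,\Phi(X))/\mathrm{var}(\Phi(X))=1$, because the tower property gives $\mathrm{cov}(Y,\Phi(X))=\mathrm{var}(\Phi(X))$; this is exactly the target value $1$ appearing in condition $ii)$. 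It therefore suffices to prove that $\hat t_y(\Phi_m)-\hat t_y(\Phi)=o_{\mathbb P}(n^{-1/2})$ and then to upgrade this to the quadratic-risk statement $n\,\mathbb E_\xi(t_y-\hat t_y(\Phi_m))^2\to V^*$.

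Writing $\hat e_u=t_u-\hat t_{u\pi}$ and using the expansion $\hat t_y(u)=\hat t_y^{HT}+\hat e_u\hat B_u$, I would decompose
$$\hat t_y(\Phi_m)-\hat t_y(\Phi)=\hat e_\Phi(\hat B_{\Phi_m}-\hat B_\Phi)+(\hat e_{\Phi_m}-\hat e_\Phi)\hat B_{\Phi_m}.$$
In the first term, $\hat e_\Phi=O_{\mathbb P}(n^{-1/2})$ by $\sqrt n$-consistency of the Horvitz--Thompson estimator (Assumption~4 and $\mathbb E(\Phi(X)^2)<\infty$), while $\hat B_{\Phi_m}-\hat B_\Phi=o_{\mathbb P}(1)$ since both factors converge to $1$ (condition $ii)$ and $B_\Phi=1$), so the product is $o_{\mathbb P}(n^{-1/2})$. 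In the second term, condition $i)$ is exactly $n\,\mathbb E_\xi(\hat e_{\Phi_m}-\hat e_\Phi)^2\to0$, so Markov's inequality (conditional on $\xi$, followed by bounded convergence) yields $\hat e_{\Phi_m}-\hat e_\Phi=o_{\mathbb P}(n^{-1/2})$, and multiplying by the bounded factor $\hat B_{\Phi_m}=O_{\mathbb P}(1)$ preserves the rate. Hence $\hat t_y(\Phi_m)$ and $\hat t_y(\Phi)$ are asymptotically equivalent in the sense of the definition preceding Proposition~\ref{marcel}.

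To obtain efficiency in the quadratic-risk form I would mirror the proof of Lemmas~1--2 and pass through the linearized estimators $\hat t_{y,\lin}(u)=\hat t_y^{HT}+\hat e_uB_u$ with \emph{deterministic} slopes $B_u$. With bounded deterministic coefficients the difference factorizes,
$$n\,\mathbb E_\xi\big(\hat t_{y,\lin}(\Phi_m)-\hat t_{y,\lin}(\Phi)\big)^2=B_{\Phi_m}^2\,n\,\mathbb E_\xi(\hat e_{\Phi_m}-\hat e_\Phi)^2+(B_{\Phi_m}-B_\Phi)^2\,n\,\mathbb E_\xi(\hat e_\Phi^2),$$
which tends to $0$ using condition $i)$ for the first summand and, for the second, $B_{\Phi_m}\to1=B_\Phi$ together with $n\,\mathbb E_\xi(\hat e_\Phi^2)\to\mathrm{var}(\Phi(X))<\infty$ (this limit being supplied by Assumption~5, as in the proof of Lemma~1). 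Minkowski's inequality for the seminorm $\big(\mathbb E_\xi(\cdot)^2\big)^{1/2}$, combined with the efficiency of the oracle, then gives $n\,\mathbb E_\xi(t_y-\hat t_y(\Phi_m))^2\to V^*$, which is the claimed efficiency.

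The hard part is precisely this last upgrade from the in-probability equivalence to the variance statement: the coefficient $\hat B_{\Phi_m}$ is random and correlated with the sampling error $\hat e_{\Phi_m}-\hat e_\Phi$, so one cannot simply factor it out of $\mathbb E_\xi[\hat B_{\Phi_m}^2(\hat e_{\Phi_m}-\hat e_\Phi)^2]$. The remedy is to replace $\hat B_{\Phi_m}$ by its deterministic limit and to control the remainder $\hat e_{\Phi_m}(\hat B_{\Phi_m}-B_{\Phi_m})$, which is $o_{\mathbb P}(n^{-1/2})$ by condition $ii)$, through a truncation of $\hat B_{\Phi_m}$ on the high-probability event $\{|\hat B_{\Phi_m}-1|\le1\}$; off this event one invokes the integrability already used to establish $\sqrt n$-consistency in Lemma~1.
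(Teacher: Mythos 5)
Your proposal is correct and rests on the same mechanism as the paper, but it routes the argument through a different benchmark. The paper's own proof of the corollary is a single sentence: it re-runs the proof of Proposition~\ref{AMEM}, which compares $\hat t_y(\Phi_m)$ directly to the \emph{linearized} oracle with deterministic slope $1$ via the identity
$\hat t_y(\Phi_m) = \hat t_y^{HT} + (t_{\Phi} - \hat t_{\Phi \pi}) + \bigl(\hat t_{\Phi \pi} - t_{\Phi} - (\hat t_{\Phi_m \pi} - t_{\Phi_m})\bigr) + (\hat B_{\Phi_m} - 1)(t_{\Phi_m} - \hat t_{\Phi_m \pi})$;
conditions $i)$ and $ii)$ are tailor-made to render the last two terms $o_{\mathbb P}(n^{-1/2})$, and Lemma~\ref{jose} applied to $f = y - \Phi$ identifies the leading term's asymptotic variance as $V^*$. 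You instead benchmark against the nonlinear oracle $\hat t_y(\Phi)$ and import its efficiency from Proposition~\ref{memop}; since your two-term decomposition regroups algebraically to the paper's (in your notation, the total correction to $\hat t_y^{HT} + \hat e_\Phi$ equals $\hat e_{\Phi_m}\hat B_{\Phi_m} - \hat e_\Phi$ in both cases), the way $i)$ and $ii)$ are used is the same. What your route buys is the explicit oracle interpretation ($B_\Phi = 1$ by the tower property) and, more substantively, an honest treatment of the step the paper silently elides: upgrading the $o_{\mathbb P}(n^{-1/2})$ equivalence to convergence of the scaled quadratic risk $n\,\mathbb E_\xi(\cdot)^2$, which the paper sidesteps by identifying the estimator's risk with that of its linearization, exactly as it already did in the proofs of Lemmas 1 and 2. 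What the paper's route buys, besides brevity, is one technical advantage you should adopt in your final paragraph: its comparison slope is the constant $1$, whereas your ``deterministic limit'' $B_{\Phi_m}$ need not be well defined when $\Phi_m$ is data-driven, which is precisely the situation the corollary is designed to cover (cf.\ the estimator $\Phi_{m(n),n}$ constructed right after it). Taking the target slope to be $1$ throughout, and noting that your displayed ``factorization'' drops a cross term (controlled by Cauchy--Schwarz), aligns your truncation step with, and slightly strengthens, the paper's argument.
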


This corollary does not rule out that the functions $\Phi_m$ are estimated using the data, which was not the case in Proposition \ref{AMEM}. Hence, it becomes possible to compute an asymptotically efficient estimator of $t_y$ without external estimator $\Phi_m$ of $\Phi$. A data driven estimator $\Phi_n$ provides as well an asymptotically efficient estimator of $t_y$, as soon as the two conditions of Corollary \ref{AMEM2} are fulfilled.\\

Now consider an example of AMEM estimator for which the computation is particularly simple, and that provides interesting interpretations. We assume for simplicity that the sampling design is uniform, here $\hat t_y^{HT}$ is simply equal to $N^{-1} \sum_{i \in s} y_i$. Let $(\phi^1,\phi^2,...)$ be a linearly independent total family of $\mathbb L^2(P_X)$. That is, for all measurable function $f: \mathbb R^k \rightarrow \mathbb R$ such that $E(f(X)^2) < \infty$, there exists a unique sequence $(\alpha_n)_{n \in \mathbb N}$ such that
$$ f(X) = \mathbb E(f(X)) + \sum_{i \in \mathbb N} \alpha_i [\phi^i (X) - \mathbb E(\phi^i (X))]. $$
For all $m$, the projection $ \Phi_m$ of $\Phi$ on  $\text{vect}\left\lbrace 1, \phi^1,...,\phi^m \right\rbrace $ is given by 
$$ \Phi_m(.) = \mathbb E(Y) + \text{cov}(Y, \phi_m(X))^t  \left[  \text{var}(\phi_m(X)) \right]^{-1} \left[ \phi_m(.) - \mathbb E (\phi_m(X)) \right] $$
where $\phi_m = (\phi^1,...,\phi^m)^t$. When $n$ is large enough in comparison to $m$, we can define a natural projection estimator $\Phi_{m,n}$ of $\Phi$ as
$$ \Phi_{m,n}(.) =  \hat t_y^{HT} + \hat B_{\phi_m}^t \left[  \phi_m(.) - \hat t_{\phi_m \pi}\right]    $$
where $ \!  \hat B_{\phi_m} \! \! = \!  \! \textstyle \left[ \sum_{i \in s} y_i (\phi_m(x_i) \!  - \hat t_{\phi_m \pi}) \right]^t \! \left[ \sum_{i \in s}  \phi_m(x_i) (\phi_m(x_i) \!  - \hat t_{\phi_m \pi})^t \right] ^{-1} \! \! \!$. \\
We now consider the AMEM estimator $\hat t(\Phi_{m,n})$:
$$\hat t_y(\Phi_{m,n}) = \hat t_y^{HT} + \dfrac{\sum_{i \in s} y_i (\Phi_{m,n}(x_i) - \hat t_{\Phi_{m,n} \pi})}{\sum_{i \in s} \Phi_{m,n}(x_i) (\Phi_{m,n}(x_i) - \hat t_{\Phi_{m,n} \pi})} (t_{\Phi_{m,n}} - \hat t_{\Phi_{m,n} \pi}) $$
which, after simplification, gives 
$$ \hat t_y(\Phi_{m,n}) = \hat t_y^{HT} +  \hat B_{\phi_m}^t (t_{\phi_{m}} - \hat t_{\phi_{m} \pi}) = t_{\Phi_{m,n}}.$$
The objective is to find a path $(m(n),n)_{n \in \mathbb N}$ for which the estimator $\Phi_n := \Phi_{m(n),n}$ satisfies the conditions of Corollary \ref{AMEM2}. We know that, for all $ m$:
\begin{eqnarray*} & & n \mathbb E_\xi(\hat t_{\Phi \pi}  - t_{\Phi} - (\hat t_{\Phi_{m,n} \pi}  - t_{\Phi_{m,n}}))^2 \\
& = & n \mathbb E_\xi(\hat t_{\Phi \pi} - t_{\Phi} + (t_{\phi_m}  - \hat t_{\phi_m \pi} )^t \hat B_{\phi_m})^2 \\
& = & \textstyle n N^{-2} \sum_{i,j \in U} \textstyle \Delta_{ij} (\Phi(x_i) - B_{\phi_m}^t \phi_m(x_i)) (\Phi(x_j) -  B_{\phi_m}^t \phi_m(x_j)) + o_{\mathbb P}(1)
\end{eqnarray*}
where $B_{\phi_m} = \lim_{(n,N/n) \rightarrow \infty} \hat B_{\phi_m} = \text{cov}(Y, \phi_m(X))^t  \left[  \text{var}(\phi_m(X)) \right]^{-1}  $. By Lemma \ref{jose}, we get: 
$$ \forall m, \ n \mathbb E_\xi(\hat t_{\Phi \pi}  - t_{\Phi} - (\hat t_{\Phi_{m,n} \pi}  - t_{\Phi_{m,n}}))^2 \stackrel{\mathbb P} { \underset{(n,N/n) \rightarrow \infty}{\longrightarrow} } \text{var} (\Phi(X) - \Phi_m(X)). $$
Since the convergence is true for all $m$, we can extract a sequence of integers $(m(n))_{n \in \mathbb N}$ such that $\Phi_n := \Phi_{m(n),n}$ undergoes the first condition of Corollary \ref{AMEM2}:
$$ n \mathbb E_\xi(\hat t_{\Phi \pi}  - t_{\Phi} - (\hat t_{\Phi_n \pi}  - t_{\Phi_n}))^2 \stackrel{\mathbb P} { \underset{(n,N/n) \rightarrow \infty}{\longrightarrow} } 0. $$
The second condition of Corollary \ref{AMEM2} is verified for such a sequence $(\Phi_n)_{n \in \mathbb N}$ since for all $n$, $\hat B_{\Phi_n} = 1$. So finally we conclude that the AMEM estimator $\hat t(\Phi_n)$ is asymptotically optimal. 
\begin{remark}\textbf{:} The AMEM estimator is obtained by plugging an estimator $\Phi_n$ of $\Phi$ in the calibration constraint. Note that $ \hat t_y(\Phi_{n})$ is the MEM estimator we obtain with constraint function $(1,\phi_{m(n)}^t)^t$. Indeed, $ \hat t_y(\Phi_{n}) = \hat t_y^{HT} +  \hat B_{\phi_{m(n)}}^t (t_{\phi_{m(n)}} - \hat t_{\phi_{m(n)} \pi})$. This is a consequence of the dimension reduction property relative to instrument estimators discussed in Section \ref{sec23}, $\Phi_{n}$ is an affine approximation of $y$ by the components of $\phi_{m(n)}(x) $. By increasing properly the number of constraints, the projection will converge toward the conditional expectation $\Phi(x)$ yielding an efficient estimator of $t_y$. \\
We can also rewrite the estimator as $ \hat t_y(\Phi_{n}) =  t_{\Phi_{n}}$. In these settings, we can interpret the AMEM procedure as building an estimator of $t_\Phi$ instead of estimating $t_y$. Because $\Phi(x)$ is not a function of $y$, it can be estimated by the empirical mean over the whole population $U$. An estimator of $t_\Phi$ will asymptotically yield an estimate of $t_y$ as a consequence of the relation $\mathbb E(\mathbb E( Y \vert X)) = \mathbb E(Y)$.
\end{remark}

\section{Numerical simulations} \label{sec4}
We shall now give some numerical applications of our results. We made a simulation of a population $U$ of size $N=100000$, where $X$ is a uniform variable on the interval $[1;2]$, and we take $Y = \exp(X) + \varepsilon$ with $\varepsilon \sim \mathcal N(0, \sigma^2)$ an independent noise. So, the conditional expectation $\Phi$ mentioned in the last section is simply the function $\exp(.)$. The sampling design is uniform and the sample $s$ is taken of size $121$. We consider six instruments estimators, $\hat t_1$ to $\hat t_6$, of which we make $50$ realizations observed from $50$ different samples drawn from the fixed population $U$, and we give for $i=1,...,6$ an estimator $V_i$ of the variance calculated from the $50$ observations. The first estimator considered $\hat t_1$ is the Horvitz-Thompson estimator, and the last one $\hat t_6$ is the AMEM estimator taken as example in Section \ref{sec32}, where we took the family $\left\lbrace X^i: i \in \mathbb N \right\rbrace $ for the base of $\mathbb L^2(P_X)$, and we set the number $m$ of constraint functions to $m=6$. The construction of the estimators are detailed in the following table. The results are given for two different values of $\sigma^2$, namely $\sigma^2 = 1$ and $\sigma^2 = 0.1$. \\
 
\noindent $1$. $\varepsilon \sim \mathcal N(0,1)$:\\

\noindent \begin{tabular}{|c|c|c|c|}
\hline      & auxiliary variable & instrument & estimated variance \\ 
\hline $\hat t_1$ (H-T estimator) &    none      &    none        & $V1=2.07 \times 10^{-2}$ \\ 
\hline $\hat t_2$ & $x$  & $(x_i)_{i \in s}$  &  $V2=7.8 \times 10^{-3}$ \\ 
\hline $\hat t_3$ & $\mathrm{x} = (1,x)$ & $(\mathrm x_i)_{i \in s}$ & $V3=7.6 \times 10^{-3}$ \\ 
\hline $\hat t_4$ & $\exp(x)$ & $(\exp(x_i))_{i \in s}$ & $V4=7.2 \times 10^{-3}$ \\ 
\hline $\hat t_5$ & $\mathrm{x} = (1,\exp(x))$ & $(\mathrm x_i)_{i \in s}$ & $V5=6.9 \times 10^{-3}$ \\ 
\hline $\hat t_6$ (AMEM estimator) & $ \mathrm{x} = (1,x,x^2,x^3,x^4,x^5,x^6)$ & $ (\mathrm x_i)_{i \in s}$ & $V6=7.2 \times 10^{-3}$ \\ 
\hline 
\end{tabular} 
$$  $$
We observe that the calibrated estimators appear to be better than the Horvitz-Thompson estimator. The choice of the auxiliary variable or the instrument does not seem to have a significant effect on the efficiency.\\

\noindent $2$. $\varepsilon \sim \mathcal N(0,0.1)$:\\

\noindent \begin{tabular}{|c|c|c|c|}
\hline      & auxiliary variable & instrument & estimated variance \\ 
\hline $\hat t_1$ (H-T estimator) &    none      &    none        & $V1=1.93 \times 10^{-2}$ \\ 
\hline $\hat t_2$ & $x$  & $(x_i)_{i \in s}$  &  $V2=3.1 \times 10^{-3}$ \\ 
\hline $\hat t_3$ & $\mathrm{x} = (1,x)$ & $(\mathrm x_i)_{i \in s}$ & $V3=8.7 \times 10^{-4}$ \\ 
\hline $\hat t_4$ & $\exp(x)$ & $(\exp(x_i))_{i \in s}$ & $V4=6.8 \times 10^{-4}$ \\ 
\hline $\hat t_5$ & $\mathrm{x} = (1,\exp(x))$ & $(\mathrm x_i)_{i \in s}$ & $V5=6.7 \times 10^{-4}$ \\ 
\hline $\hat t_6$ (AMEM estimator) & $ \mathrm{x} = (1,x,x^2,x^3,x^4,x^5,x^6)$ & $ (\mathrm x_i)_{i \in s}$ & $V6=7.0 \times 10^{-4}$ \\ 
\hline 
\end{tabular} 
$$  $$
Here, $X$ explains almost entirely $Y$ since the variance of $\varepsilon$ is low ($\sigma^2=0.1$). In that case, the choice of the auxiliary variable and instrument appears to play a more important role. We notice a significant difference between $\hat t_2$ and $\hat t_3$ which points out the importance of the instrument. More specifically, we see that the instrument $(x_i - \hat t_x^{HT})_{i \in s}$ (which is equivalent to adding the constant $1$ as an auxiliary variable) provides a better estimator than $x_i$. Furthermore, also note that using the auxiliary variable $\Phi(x)=\exp(x)$ provides the best estimator in term of minimal variance as we see that $V4$ and $V5$ are the smallest estimated variances. These estimators can be viewed as oracles, since the auxiliary variable used in that case is the optimal choice, but is in general unknown (see Section \ref{sec32}). The difference between $\hat t_4$ and $\hat t_5$ is not significant, as expected, according to the second example of Section \ref{sec31}. Finally, the AMEM estimator has its variance lying between that of the standard calibrated estimator $\hat t_3$ and that of the oracles, which conveys that it is more efficient than $\hat t_3$.

\section{Appendix} \label{sec5}

\subsection{Technical lemma}

\begin{lemma}\label{jose} Let $\mathcal F$ be the set of all functions $f:  (\mathbb R^k \times \mathbb R) \rightarrow \mathbb R$ such that $\mathbb E ( \vert f(X,Y) \vert^3)$ is finite (we set $ f_i = f(x_i,y_i)$ for all $i \in U$). Under Assumptions 1, 2 and 4, 
$$ \forall f \in \mathcal F, \ n N^{-2} \sum_{i,j \in U} \textstyle \Delta_{ij} \ f_i f_j \geq  \emph{var} (f(X,Y)) + o_{\mathbb P}(1) $$ 
as $(n, N/n) \rightarrow \infty$, with equality if and only if Assumption 5 also holds. In that case, the quantity $ n N^{-2} \sum_{i,j \in U} \textstyle \Delta_{ij} \ f_i g_j$ converges in probability toward $\emph{cov}  (f(X,Y), g(X,Y))$ for all $f, g \in \mathcal F$ as $(n, N/n) \rightarrow \infty$.
\end{lemma}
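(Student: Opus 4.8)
The plan is to recognise the quadratic form as a rescaled design variance and to exploit the positive semidefiniteness this entails, together with the independence of the population $\xi$ granted by Assumption~1 (the randomness throughout is in $\xi$, since by Assumption~2 the $\Delta_{ij}$ are deterministic). Writing $Z_i=\mathds{1}(i\in s)\,d_i$, one has $\Delta_{ij}=\text{cov}_p(Z_i,Z_j)$, so $\Delta=(\Delta_{ij})$ is positive semidefinite and $nN^{-2}\sum_{i,j}\Delta_{ij}f_if_j=n\,\text{var}_p\!\big(\sum_{i\in s}d_if_i\big)\ge 0$. I would then center, putting $\mu=\mathbb E(f(X,Y))$, $\sigma^2=\text{var}(f(X,Y))$ and $g_i=f_i-\mu$, and expand
$$ nN^{-2}\sum_{i,j}\Delta_{ij}f_if_j = nN^{-2}\sum_i\Delta_{ii}g_i^2 \;+\; nN^{-2}\sum_{i\neq j}\Delta_{ij}g_ig_j \;+\; 2\mu\,nN^{-2}\sum_{i}g_i\Big(\sum_j\Delta_{ij}\Big) \;+\; \mu^2\,nN^{-2}\sum_{i,j}\Delta_{ij}, $$
calling these four pieces $D$, $E$, $C$ and $R_\mu$.

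Next I would dispose of the fluctuating terms $E$ and $C$ using Assumption~4 alone. Both have mean zero in $\xi$, and since the $(x_i,y_i)$ are i.i.d.\ with $\mathbb E(f^2)<\infty$, a direct computation gives $\text{var}_\xi(E)=2\sigma^4(nN^{-2})^2\sum_{i\neq j}\Delta_{ij}^2$ and $\text{var}_\xi(C)=4\mu^2\sigma^2(nN^{-2})^2\sum_i(\sum_j\Delta_{ij})^2$ (no fourth moment enters $E$ because $i\neq j$). The first is $o(1)$ by the second bound of Assumption~4; for the second I would use $(\sum_j\Delta_{ij})^2\le 2\Delta_{ii}^2+2(N-1)\sum_{j\neq i}\Delta_{ij}^2$ and both bounds of Assumption~4, so $E=o_{\mathbb P}(1)$ and $C=o_{\mathbb P}(1)$ by Chebyshev. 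For the constant term, positive semidefiniteness gives $R_\mu=\mu^2\,nN^{-2}\,\mathbf 1^t\Delta\,\mathbf 1\ge 0$. Two one-sided structural facts are then key: setting $A_N=nN^{-2}\sum_i\Delta_{ii}$ (with $\Delta_{ii}=d_i-1\ge 0$), Cauchy--Schwarz gives $\sum_i d_i\ge N^2/\sum_i\pi_i=N^2/n$, hence $A_N\ge 1-n/N\to 1$; and $R_\mu\ge 0$.

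The heart of the matter is the diagonal term $D=\sum_i w_ig_i^2$ with $w_i=nN^{-2}\Delta_{ii}\ge 0$, which I would treat as a weighted law of large numbers. Assumption~4 yields $\sum_i w_i^2=n^2N^{-4}\sum_i\Delta_{ii}^2=o(1)$, while $\sum_i w_i=A_N\ge 1-n/N$. Since only a finite second moment of $g$ is available (a naive variance bound for $D$ would need $\mathbb E(g^4)$), I would argue by truncation, using nonnegativity for a clean one-sided estimate: for fixed $T$, $\sum_i w_ig_i^2\mathds{1}(g_i^2\le T)$ has variance $\le T^2\sum_iw_i^2\to 0$, hence concentrates at $A_N\,\mathbb E(g^2\mathds{1}(g^2\le T))\ge(1-n/N)\,\mathbb E(g^2\mathds{1}(g^2\le T))$; bounding $D$ below by its truncation and letting $T\to\infty$ recovers $D\ge\sigma^2+o_{\mathbb P}(1)$, which with $R_\mu\ge 0$ gives the announced inequality. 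For equality, Assumption~5 states exactly $A_N\to 1$ and $A_N+B_N=nN^{-2}\sum_{i,j}\Delta_{ij}\to 0$; the same scheme, now also bounding the tail $\sum_iw_ig_i^2\mathds{1}(g_i^2>T)$ in mean by $A_N\,\mathbb E(g^2\mathds{1}(g^2>T))\to 0$, upgrades the bound to $D\to\sigma^2$, and $R_\mu\to 0$ forces $Q_f\to\sigma^2$. Conversely, testing with centered $f$ gives $A_N\to 1$ and with $f\equiv 1$ gives $Q_f=A_N+B_N\to 0$, i.e.\ Assumption~5. The covariance statement then follows by polarisation, applying the variance limit to $f+g$, $f$, $g$ (all in $\mathcal F$).

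The main obstacle is precisely this diagonal weighted law of large numbers under only a second-moment assumption: the truncation argument, leaning on the nonnegativity of both the weights $w_i$ and the $g_i^2$, is what circumvents the missing fourth moment and produces the correct one-sided bound. The accompanying subtlety is the realisation that the two purely design-theoretic inequalities $A_N\ge 1-n/N$ (Cauchy--Schwarz) and $\mathbf 1^t\Delta\,\mathbf 1\ge 0$ (positive semidefiniteness) are exactly what make the lower bound valid under Assumption~4 alone, with Assumption~5 entering only to turn the inequality into an equality.
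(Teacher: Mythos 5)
Your proof is correct, and its skeleton coincides with the paper's: both arguments rest on the same two design-theoretic inequalities --- $nN^{-2}\sum_{i,j}\Delta_{ij}\geq 0$ (you obtain it from positive semidefiniteness of the covariance matrix of the variables $\mathds{1}(i\in s)\,d_i$; the paper gets the identical fact from Jensen's inequality applied to $\sum_{s}(\sum_{i\in s}d_i)^2p(s)-N^2$) and $nN^{-2}\sum_i\Delta_{ii}\geq 1-n/N$ (convexity/Cauchy--Schwarz in both) --- combined with a law-of-large-numbers replacement of the quadratic form by population moments, Assumption 5 serving only to turn the inequality into an equality. Where you genuinely add something is in the execution. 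The paper works with uncentered $f_i$ and simply asserts that the diagonal and off-diagonal sums equal $(nN^{-2}\sum_i\Delta_{ii})\,\mathbb E(f(X,Y)^2)$ and $(nN^{-2}\sum_{i\neq j}\Delta_{ij})\,(\mathbb E f(X,Y))^2$ up to $o_{\mathbb P}(1)$; a naive Chebyshev justification of the diagonal part would need $\mathbb E(f(X,Y)^4)<\infty$, which is not among the hypotheses. Your centered decomposition isolates exactly this difficulty in the term $D=\sum_i w_ig_i^2$, and your truncation argument --- leaning on the nonnegativity of the $w_i$ and of $g_i^2$, together with $\sum_i w_i^2=o(1)$ from Assumption 4 --- closes it using only second moments, so your write-up actually repairs the one soft spot in the paper's proof. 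You also supply what the paper leaves implicit: explicit variance computations for the fluctuation terms $E$ and $C$ under Assumption 4, a concrete argument for the ``only if'' direction (testing with a centered $f$ of positive variance and with $f\equiv 1$), and the polarization step for the covariance claim, where the paper only says the same pattern applies to $f$ and $g$.
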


\begin{proof} \textbf{of Lemma \ref{jose}:}\\ 
Assumptions 1, 2 and 4 yield for all $f \in \mathcal F$:
\begin{eqnarray*}
 & &  \textstyle   n N^{-2} \sum_{i,j \in U} \Delta_{ij} \ f_i f_j =   n N^{-2} \sum_{i \in U} \Delta_{ii} \ f_i^2  +   n N^{-2}  \sum_{i \neq j } \Delta_{ij} \  f_i f_j  \\
& = &  \textstyle \left( n N^{-2}  \sum_{i \in U} \Delta_{ii} \right) \mathbb E (f(X,Y) ^2)  +  \left(  n N^{-2}  \sum_{i \neq j } \Delta_{ij} \right)  \mathbb E (f(X,Y) )^2 + o_{\mathbb P}(1) 
\end{eqnarray*}
Let $ \mathcal{P}_n(U)$ denote the set of all subsample $s$ of $U$ with $n$ elements. By Jensen inequality, we get
$$   \textstyle  \sum_{i, j \in U } \Delta_{ij} =  \textstyle  \sum_{s \in \mathcal{P}_n(U)} \left( \textstyle \sum_{i \in s} d_i \right)^2 p(s)  - N^2 \geq  \left[   \sum_{s \in \mathcal{P}_n(U)} \left( \textstyle \sum_{i \in s} d_i \right) p(s) \right] ^2  - N^2  \geq 0 $$
which implies that $  \sum_{i \neq j } \Delta_{ij}  \geq   - \sum_{i \in U} \Delta_{ii} $. Thus: 
$$  \textstyle \ \ n N^{-2}  \sum_{i,j \in U}  \Delta_{ij} \ f_i f_j \geq  \left( n N^{-2} \sum_{i \in U} \Delta_{ii} \right) \text{var} (f(X,Y)) + o_{\mathbb P}(1).$$
Since $\sum_{i \in U} \pi_i = n$, we know that $n N^{-2} \sum_{i \in U} \Delta_{ii} \geq 1 - n N^{-1}$ by convexity of $x \mapsto 1/x$ on $\mathbb R_+^*$. Hence
$$  \textstyle  n N^{-2} \sum_{i,j \in U} \Delta_{ij} \ f_i f_j \geq \text{var} (f(X,Y)) + o_{\mathbb P}(1).$$
as $(n, N/n) \rightarrow \infty$. Furthermore, it is not an equality for all $f \in \mathcal F$ if Assumption 5 is not true. We show the second part of the lemma using the same pattern as in the beginning of the proof applied to $f$ and $g$. In particular, it holds when $f=g$. 
\end{proof}

\subsection{Proofs}

\begin{proof} \textbf{of Theorem \ref{prop1}:}\\ 
For all $w \in \mathbb R^n$, let $f_w: \mathbb R^n \rightarrow \mathbb R_+$ be the unique minimizer of the functional $ f \mapsto K(f \nu, \nu)$ on the set $\mathcal F_w = \left\lbrace f: \int_{\mathbb R^n} (\tau - \pi w)f(\tau) d\nu(\tau) = 0 \right\rbrace  $. We have:
$$  \textstyle  f_w = \underset{f \in \mathcal F_w}{ \text{argmin}} \ \int_{\mathbb R^n} f (\log (f)-1) d \nu.    $$
We calculate the Lagrangian $\mathcal L(\lambda, f)$ associated to the problem:
$$ \textstyle \mathcal L(\lambda, f) = \int_{\mathbb R^n} [ f(\tau) \log (f(\tau)) - f(\tau) ] d \nu(\tau) - \lambda^t \int_{\mathbb R^n} (\tau- \pi w) f(\tau) d\nu(\tau)$$
where $\lambda \in \mathbb R^n$ is the Lagrange multiplier. The first order conditions are:
$$  \textstyle  \forall \tau \in \mathbb R^n, \ \log (f(\tau)) = \lambda^t (\tau- \pi w).$$
Hence, $\forall \tau, \ f_w(\tau) = e^{\lambda_w^t (\tau- \pi w)}$ where $\lambda_w$ verifies:
$$  \textstyle \int_{\mathbb R^n} (\tau- \pi w) e^{\lambda^t (\tau- \pi w) } d\nu(\tau) = 0 \Longleftrightarrow \lambda_w = \underset{\lambda \in \mathbb R^n}{ \text{argmin}} \int_{\mathbb R^n} e^{\lambda^t (\tau- \pi w) } d\nu(\tau)$$
Let $ S = \left\lbrace (w_i)_{i \in s}: \ \textstyle N^{-1} \sum_{i \in s} x_i w_i = t_x \right\rbrace $, we notice that 
\begin{eqnarray*}
\hat w = \textstyle \mathbb E_{\nu^*} (W) & = & \textstyle \underset{w \in S}{ \text{argmin}} \left\lbrace \text{min}_{f \in \mathcal F_w} \  \int_{\mathbb R^n} f (\log (f)-1) d \nu \right\rbrace \\
& = & \textstyle  \underset{w \in S}{ \text{argmin}} \left\{ \int_{\mathbb R^n} f_w (\log (f_w)-1) d \nu \right\} \\
& = & \textstyle  \underset{w \in S}{ \text{argmin}}  \left\{ \lambda_w^t \int_{\mathbb R^n}  (\tau- \pi w) e^{\lambda_w^t (\tau- \pi w) } d\nu(\tau) - \int_{\mathbb R^n} e^{\lambda_w^t (\tau- \pi w)} d\nu(\tau) \right\} \\
& = &  \textstyle  \underset{w \in S}{ \text{argmin}} \left\{ - \text{min}_{\lambda \in \mathbb R^n} \ e^{-\lambda^t \pi w} \int_{\mathbb R^n} e^{\lambda^t \tau}  d\nu(\tau) \right\}.
\end{eqnarray*}
by definition of $\lambda_w$. Recall that $\nu = \otimes_{i \in s} \nu_i$. Since the function $t \mapsto - \log t$ is decreasing, we have that
$$ \textstyle \underset{\lambda \in \mathbb R^n}{\text{min}} \left\{ e^{-\lambda^t \pi w} \int_{\mathbb R^n} e^{\lambda^t \tau}  d\nu(\tau) \right\} = \exp - \underset{\lambda \in \mathbb R^n}{\text{sup}} \left\{ \sum_{i \in s} [ \lambda_i \pi_i w_i - \log \int_{\mathbb R} e^{\lambda_i \tau_i}  d\nu_i(\tau_i)  ]    \right\}    $$
The supremum being taken for $\lambda \in \mathbb R^n$, we see that 
$$ \textstyle \underset{\lambda \in \mathbb R^n}{\text{sup}} \left\{ \sum_{i \in s} [ \lambda_i \pi_i w_i  - \log \int_{\mathbb R} e^{\lambda_i \tau_i}  d\nu_i(\tau_i) ]    \right\} = \sum_{i \in s} \underset{\lambda_i \in \mathbb R}{\text{sup}} \left\{ \lambda_i \pi_i w_i  - \log \int_{\mathbb R} e^{\lambda_i \tau_i}  d\nu_i(\tau_i) \right\} $$
%For all $i \in s$, set $t_i = \pi_i \tau_i$ so that $d\nu(\tau) = \otimes_{i \in s} d\nu_i(t_i)$. 
Finally we obtain:
\begin{eqnarray*}
 \hat w = \textstyle  \underset{w \in S}{ \text{argmin}} - \exp \left( - \sum_{i \in s} \Lambda_{\nu_i}^*(\pi_i w_i) \right) =  \textstyle \underset{w \in S}{ \text{argmin}} \ \sum_{i \in s} \Lambda_{\nu_i}^*(\pi_i w_i).
\end{eqnarray*}
\end{proof}

\begin{proof} \textbf{of Proposition \ref{prop2}:}\\ 
It is a classic convex optimization problem. Let $\mathcal L$ be the Lagrangian associated to the problem:
$$  \textstyle \mathcal{L}(\lambda,w) =  \sum_{i \in s} \Lambda_{\nu_i}^*(w_i \pi_i) - \lambda^t \left( \sum_{i \in s} w_i x_i - N t_x \right)  $$
where $\lambda \in \mathbb R^k$ is the Lagrange multiplier. The solutions to the first order conditions satisfy for all $i \in s$,
$$ w_i = d_i ({\Lambda_{\nu_i}^*}')^{-1} ( \lambda^t d_i x_i ),  $$
where we recall that the functions $\Lambda_{\nu_i}^*$ are assumed to be strictly convex, so that $({\Lambda_{\nu_i}^*}')^{-1}$ exists for all $i$, and is equal to $ \Lambda_{\nu_i}'$. Now it suffices to apply the solutions of the first order conditions to the constraint to obtain an expression of the solution $\hat \lambda$: 
\begin{eqnarray*}
 N^{-1}  \textstyle \sum_{i \in s} d_i \Lambda_{\nu_i}'( \hat \lambda^t d_i x_i)  x_i - t_x = 0 \Longleftrightarrow  \hat \lambda = \underset{\lambda \in \mathbb{R}^k}{\text{argmin}} \sum_{i \in s} \Lambda_{\nu_i} ( \lambda^t d_i x_i)  - \lambda^t t_x.
\end{eqnarray*}
The equivalence is justified by the fact that $\Lambda_{\nu_i}$ is strictly convex, and therefore, so is $\lambda \mapsto \sum_{i \in s} \Lambda_{\nu_i} ( \lambda^t  d_i x_i)  - \lambda^t t_x $. For that reason, $\hat \lambda$ is uniquely defined. We finally obtain an expression of the calibrated weights 
$$  \forall i \in s, \ \hat w_i = d_i  \Lambda_{\nu_i}' (\hat \lambda^t d_i x_i ). $$
\end{proof}

\begin{proof} \textbf{of Proposition \ref{marcel}:}\\ 
Let $F: \lambda \mapsto N^{-1} \sum_{i \in s} d_i f_i(\lambda) x_i$, and $G: \lambda \mapsto N^{-1} \sum_{i \in s} d_i g_i(\lambda) x_i$. We call respectively $\hat \lambda$ and $\tilde \lambda$ the solutions to $F(\lambda) = t_x$ and $G(\lambda) = t_x$. We have
$$ F(\hat \lambda) = F(0) + X_n \hat \lambda + o(\Vert \hat \lambda \Vert) $$
and then $ (t_x - \hat t_x^{HT}) = X_n \hat \lambda +  o(\Vert \hat \lambda \Vert)$. By assumption, $X_n$ is invertible for large values of $n$ since it converges towards an invertible matrix $X$. Thus, whenever $\hat t_x^{HT}$ is close enough to $t_x$, there exists $\lambda_0$ in a neighborhood of $0$ such that $F(\lambda_0) = t_x$. By uniqueness of the solution, we conclude that $\lambda_0 = \hat \lambda$. Hence, for large values of $n$, 
$$ \hat \lambda = X_n^{-1} (t_x - \hat t_x^{HT}) + o_\mathbb P (n^{-1/2}). $$
A similar reasoning for $\tilde \lambda$ yields $ \Vert \tilde \lambda - \hat \lambda \Vert =  o_\mathbb P (n^{-1/2})$. Thus, $ \hat \lambda $ and $\tilde \lambda$ converge toward $0$ and by Taylor formula:
$$ f_i(\hat \lambda) = 1 + z_i^t \hat \lambda + o_\mathbb P (n^{-1/2}) = 1 + z_i^t \tilde \lambda + o_\mathbb P (n^{-1/2}) = g_i (\tilde \lambda) +  o_\mathbb P (n^{-1/2}).  $$
It follows that $\hat t_y$ and $\tilde t_y$ are asymptotically equivalent.\\ 
We know that MEM estimation reduces to taking $f_i(.) = \Lambda_{\nu_i}'(d_i x_i^t .)$ in a GC procedure. Hence, in that case, $\nabla f_i(0) = d_i \Lambda_{\nu_i}''(0) x_i$. Since the variance of a probability measure $\nu_i$ is given by $\Lambda_{\nu_i}''(0)$, two MEM estimators with prior distributions having the same respective variances are asymptotically equivalent. Furthermore, a Gaussian prior $\nu_i \sim \mathcal N(1,q_i \pi_i) $ has a log-Laplace transform $\Lambda_{\nu_i}: t \mapsto \pi_i q_i t^2 /2 + t$ which corresponds to $f_i(\lambda) = \Lambda_{\nu_i}'(d_i x_i^t \lambda) = 1 + q_i x_i^t \lambda$. The resulting MEM estimator is thus the instrument estimator obtained with instruments $z_i = q_i x_i, i \in s$.
\end{proof}

\begin{proof} \textbf{of Proposition \ref{memop}:}\\ 
We set $u = (v_1,...,v_{d})$, the matrix $\text{var} (u(X))$ is invertible. By assumption on $(q_i)_{i \in s}$, we have 
$$ \textstyle \kappa \sum_{i \in s} d_i q_i y_i v(x_i) = (\kappa \sum_{i \in s} d_i q_i) \mathbb E(Y v(X)) + \kappa \ o_\mathbb P (1)$$
and
$$ \textstyle \kappa \sum_{i \in s} d_i q_i v(x_i) v(x_i)^t = (\kappa \sum_{i \in s} d_i q_i) \mathbb E(v(X) v(X)^t) + \kappa \ o_\mathbb P (1).$$
Since $(\kappa \sum_{i \in s} d_i q_i)$ is bounded away from zero, it follows that 
$$\hat B_v = \textstyle \left[ \textstyle \sum_{i \in s} d_i q_i v(x_i) v(x_i)^t \right]^{-1} \sum_{i \in s} d_i q_i y_i v(x_i) \overset{\mathbb P}{\longrightarrow} \left[ \mathbb E(v(X)v(X)^t) \right]^{-1}\mathbb E (Y v(X)) = B_v.$$
By simple algebra, we show the functional equality $B_v^tv(.) = B_u ^t u(.) + K$,
where $K $
is constant, and therefore does not modify the value of the variance. More precisely, the asymptotic variance of $\hat t_y(v)$ is
$$ \text{var}(Y- \text{cov}(Y,u(X))^t \left[ \text{var}(u(X) \right]^{-1} u(X) + K) = V^*(u), $$
which proves that the MEM estimator $\hat t_{y}(v)$ is asymptotically efficient.
\end{proof}

\begin{proof} \textbf{of Proposition \ref{AMEM}:}\\ 
We decompose the AMEM estimator as follow
$$ \hat t_y(\Phi_m) = \hat t_y^{HT} + (t_{\Phi} - \hat t_{\Phi \pi}) + (\hat t_{\Phi \pi} \!  - t_{\Phi} - (\hat t_{\Phi_m \pi}  - t_{\Phi_m})) + (\hat B_{\Phi_m} \! \! - 1) (t_{\Phi_m} \! \! - \hat t_{\Phi_m \pi}). $$
We have by assumption
\begin{eqnarray*}
n \mathbb E_\xi(\hat t_{\Phi \pi} \!  - t_{\Phi} - (\hat t_{\Phi_m \pi}  - t_{\Phi_m}))^2 = O_{\mathbb P}(\varphi_m^{-1}) \ \text{ and } \ 
(\hat B_{\Phi_m} \! - 1) = O_{\mathbb P}(\varphi_m^{-1/2}) 
\end{eqnarray*}
as $n, N/n \rightarrow \infty$ and uniformly for all $m$ (see the proof of Lemma 1 in \cite{l2}). Hence, the terms $(\hat t_{\Phi \pi} \!  - t_{\Phi} - (\hat t_{\Phi_m \pi}  - t_{\Phi_m}))$ and $(\hat B_{\Phi_m} \! \! - 1) (t_{\Phi_m} \! \! - \hat t_{\Phi_m \pi})$ are asymptotically negligible in comparison to $(t_\Phi - \hat t_{\Phi \pi})$ as $n, N/n, m \rightarrow \infty$. We conclude using Result 2 and Lemma \ref{jose}. 
\end{proof}

\begin{proof} \textbf{of Corollary \ref{AMEM2}:}\\ 
All conditions are fulfilled so that the proof of Proposition \ref{AMEM} remains valid in that case. 
\end{proof}

\bibliographystyle{plain}
\bibliography{survey}

\end{document}